\tikzstyle{edge}=[very thick]
\definecolor{bostonuniversityred}{rgb}{0.8, 0.0, 0.0}
\definecolor{arsenic}{rgb}{0.23, 0.27, 0.29}
\tikzstyle{diredge}=[postaction={decorate,decoration={markings,
\tikzset{
    arrow/.style={decoration={markings, mark=at position 0.5 with
    {\fill(-0.18*#1,-0.06*#1) -- (0,0) -- (-0.18*#1,0.06*#1) -- cycle;}, black}, postaction={decorate}},
    arrow/.default=1
}
\tikzset{
    arow/.style={decoration={markings, mark=at position 1 with
    {\fill(-0.09*#1,-0.03*#1) -- (0,0) -- (-0.09*#1,0.03*#1) -- cycle;}}, postaction={decorate}},
    arow/.default=1
}
\tikzset{
    arrrow/.style={decoration={markings, mark=at position 0.9 with
    {\fill(-0.09*#1,-0.03*#1) -- (0,0) -- (-0.09*#1,0.03*#1) -- cycle;}}, postaction={decorate}},
    arow/.default=1
}
\newcommand{\fitellipsis}[2] 
{\draw [fill=white]let \p1=(#1), \p2=(#2), \n1={atan2(\y2-\y1,\x2-\x1)}, \n2={veclen(\y2-\y1,\x2-\x1)}
    in ($ (\p1)!0.5!(\p2) $) ellipse [ x radius=\n2/2+0cm, y radius=1.1cm, rotate=\n1];
}
\newcommand{\Fitellipsis}[2] 
{\draw [fill=white]let \p1=(#1), \p2=(#2), \n1={atan2(\y2-\y1,\x2-\x1)}, \n2={veclen(\y2-\y1,\x2-\x1)}
    in ($ (\p1)!0.5!(\p2) $) ellipse [ x radius=\n2/2+0cm, y radius=1.4cm, rotate=\n1];
}
\theoremstyle{plain}
\newtheorem*{thm*}{Theorem}
\newtheorem{thm}{Theorem}[section]
\Crefname{thm}{Theorem}{Theorems}
\newtheorem*{lem*}{Lemma}
\newtheorem{lem}[thm]{Lemma}
\Crefname{lem}{Lemma}{Lemmas}
\newtheorem*{claim*}{Claim}
\newtheorem{claim}[thm]{Claim}
\crefname{claim}{Claim}{Claims}
\Crefname{claim}{Claim}{Claims}
\newtheorem{prop}[thm]{Proposition}
\Crefname{prop}{Proposition}{Propositions}
\Crefname{remar}{Remark}{Remarks}
\newtheorem{cor}[thm]{Corollary}
\crefname{cor}{Corollary}{Corollaries}
\newtheorem*{conj*}{Conjecture}
\newtheorem{conj}[thm]{Conjecture}
\crefname{conj}{Conjecture}{Conjectures}
\Crefname{qn}{Question}{Questions}
\newtheorem*{obs*}{Observation}
\newtheorem{obs}[thm]{Observation}
\Crefname{obs}{Observation}{Observations}
\Crefname{ex}{Example}{Examples}
\theoremstyle{definition}
\Crefname{prob}{Problem}{Problems}
\newtheorem{defn}[thm]{Definition}
\Crefname{defn}{Definition}{Definitions}
\theoremstyle{remark}
\renewenvironment{proof}[1][]{\begin{trivlist}
\item[\hspace{\labelsep}{\bf\noindent Proof#1.\/}] }{\qed\end{trivlist}}
\newcommand{\remove}[1]{}
\newcommand{\ceil}[1]{
    \left\lceil #1 \right\rceil
}
\newcommand{\floor}[1]{
    \left\lfloor #1 \right\rfloor
}
\newcommand{\F}{\mathbb{F}}
\newcommand{\N}{\mathbb{N}}
\newcommand{\dist}{\mathsf{dist}}
\newcommand{\poly}{\mathsf{poly}}
\title{\vspace{-1 cm}
Small Even Covers, Locally Decodable Codes and Restricted Subgraphs of Edge-Colored Kikuchi Graphs}
\date{}
\author{
Jun-Ting Hsieh\thanks{\texttt{juntingh@cs.cmu.edu}, Carnegie Mellon University}
\and 
Pravesh K. Kothari \thanks{\texttt{kothari@cs.princeton.edu}, Princeton University. Supported by NSF CAREER Award \#2047933, NSF \#2211971, an Alfred P. Sloan Fellowship, and a Google Research Scholar Award. \textbf{}}
\and 
Sidhanth Mohanty\thanks{\texttt{sidhanth@csail.mit.edu}, MIT}
\and
David Munh\'a Correia\thanks{
Department of Mathematics, ETH, Z\"urich, Switzerland. Research supported in part by SNSF grant 200021-228014.
\newline
\emph{Emails}: \texttt{\{david.munhacanascorreia, benjamin.sudakov\}@math.ethz.ch}.
}
\and
Benny Sudakov\footnotemark[4]}
\begin{document} 
\maketitle

\begin{abstract}
Given a $k$-uniform hypergraph $H$ on $n$ vertices, an even cover in $H$ is a collection of hyperedges that touch each vertex an even number of times. Even covers are a generalization of cycles in graphs and are equivalent to linearly dependent subsets of a system of linear equations modulo $2$. As a result, they arise naturally in the context of well-studied questions in coding theory and refuting unsatisfiable $k$-SAT formulas. Analogous to the irregular Moore bound of Alon, Hoory and Linial~\cite{AlonHL02}, Feige conjectured~\cite{Feige08} an extremal trade-off between the number of hyperedges and the length of the smallest even cover in a $k$-uniform hypergraph. This conjecture was recently settled up to a multiplicative logarithmic factor in the number of hyperedges~\cite{GuruswamiKM22,HsiehKM23}. These works introduce the new technique that relates hypergraph even covers to cycles in the associated \emph{Kikuchi} graphs. Their analysis of these Kikuchi graphs, especially for odd $k$, is rather involved and relies on matrix concentration inequalities.

In this work, we give a simple and purely combinatorial argument that recovers the best-known bound for Feige's conjecture for even $k$. We also introduce a novel variant of a Kikuchi graph which together with this argument improves the logarithmic factor in the best-known bounds for odd $k$. As an application of our ideas, we also give a purely combinatorial proof of the improved lower bounds~\cite{AlrabiahGKM23} on 3-query binary linear locally decodable codes. 
\end{abstract}

\section{Introduction}
A set $S$ of hyperedges in a hypergraph $\mathcal{H}$ is an \emph{even cover} if 
$$\bigoplus_{E \in S} E := \{v \in v(\mathcal{H}) : v \text{ belongs to an odd number of hyperedges in $S$}\} = \emptyset .$$
Equivalently, $S$ is an even cover if $\sum_{E \in S} v_E = \mathbf{0}$ over $\F_2$, where $v_E$ denotes the characteristic vector of $E$. In this work, we are interested in understanding the extremal trade-offs between the \emph{size} of a $k$-uniform hypergraph $H$ and \emph{girth}, i.e., the number of hyperedges in the shortest even cover in it. 

\paragraph{Even Covers and Linear Dependencies} Even covers in $k$-uniform hypergraphs correspond to linearly dependent subsets of a system of $k$-sparse (i.e., each equation has exactly $k$ non-zero coefficients) linear equations over $\F_2$. To see why, let us associate a variable $x_v$ for each $v \in H$ and the $|E|$-sparse equation $\sum_{v \in E} x_v = b_E$ for $b_E \in \F_2$ with each edge $E$ in $H$. Then, observe that for any even covers $S$, the left hand sides of the equations corresponding to $E \in S$ add up to $0$ and are thus linearly dependent. Thus, size vs girth trade-offs for $k$-uniform hypergraphs correspond to the largest possible size $\ell$ of the minimum linear dependency in a system of linear equations with $m$ equations in $n$ variables. Such $k$-sparse linear equations arise naturally as the parity check equations for \emph{low-density parity check} error correcting codes. The size vs girth trade-offs for $k$-uniform hypergraphs thus correspond to rate vs distance trade-offs for such codes. 

By the equivalence between even covers and linear dependencies, it is clear that every hypergraph with $m \geq n+1$ hyperedges must have an even cover of length at most $n+1$ and this is clearly tight. The natural question is then: How does the trade-off between $m$ and the maximum possible girth look as $m$ increases beyond $n+1$?

\paragraph{Size vs Girth Trade-offs} When $H$ is a $2$-uniform hypergraph, an even cover is simply an even subgraph, i.e., a subgraph with all vertices of even degree. Such a subgraph is a union of edge-disjoint cycles in $H$. An even cover with smallest number of edges must in fact be a simple cycle and thus, its length must be the girth of the graph $H$. The extremal trade-off between the size of $H$ (i.e., the number of edges in the graph) and its girth was conjectured by Boll\'obas and confirmed by Alon, Hoory and Linial~\cite{AlonHL02} who proved the \emph{irregular Moore bound} --- every graph on $n$ vertices with average degree $d$ has girth at most $2 \lceil \log_{d-1}(n) \rceil$. It is an outstanding open problem whether the constant $2$ in this bound can be further improved (for the best known constant, see \cite{LUW}). 

For $k$-uniform hypergraphs with $k>2$, the size vs girth trade-offs were first studied by Naor and Verstraete~\cite{NaorV08} through applications to  rate vs distance trade-offs for LDPC codes discussed above. They showed that every $H$ with $m \geq n^{k/2} \log^{O(1)}(n)$ hyperedges on $n$ vertices must contain an even cover of length $O(\log n)$. The $\log^{O(1)}(n)$ factor was further improved to a $O(\log \log n)$-factor in a subsequent work of Feige~\cite{Feige08}. For $k=2$, this recovers a coarse version of the irregular Moore bound. For $k>2$, however, there is an interesting regime between the two extreme thresholds of $m=n+1$ (with maximum possible girth of $n+1$) and $m \sim n^{k/2} \log^{O(1)}(n)$ (with maximum possible girth of $O(\log n)$). 

\paragraph{Feige's Conjecture} In 2008, Feige~\cite{Feige08} formulated a  conjecture about this in-between regime that suggests a smooth interpolation between the two extremes noted above. 
\begin{conj}
Fix any $k \in \N$. Then, there exists a sufficiently large $C>0$ such that for sufficiently large $n \in \N$ and every $\ell \in \N$, every $k$-uniform hypergraph $H$ with $m \geq C n (\frac{n}{\ell})^{k/2-1}$ hyperedges has an even cover of length at most $O(\ell \log_2 n)$.
\end{conj}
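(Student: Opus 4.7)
The plan is to reduce Feige's conjecture to finding a short cycle in an appropriately defined Kikuchi graph and then invoke the irregular Moore bound of Alon, Hoory, and Linial~\cite{AlonHL02}. I describe the construction for even $k = 2t$, where it is cleanest; odd $k$ would require the novel variant alluded to in the abstract, as discussed at the end.

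Fix a level parameter $\ell' = c\ell$ for a small constant $c = c(k)$, and define the Kikuchi graph $K$ with vertex set $V(K) = \binom{[n]}{\ell'}$. For each hyperedge $E \in \mathcal{H}$ and each $S \in V(K)$ with $|S \cap E| = t$, add an edge in $K$ between $S$ and $S \triangle E$, labeled by $E$. The essential feature of this construction is that XOR-ing the labels along any closed walk in $K$ produces $\mathbf{0}$, so closed walks supply multisets of hyperedges of $\mathcal{H}$ summing to $\mathbf{0}$ in $\F_2$. Each hyperedge $E$ contributes exactly $\binom{k}{t}\binom{n-k}{\ell'-t}$ edges to $K$, so the average degree satisfies
\[
d(K) \;=\; \frac{m\,\binom{k}{t}\binom{n-k}{\ell'-t}}{\binom{n}{\ell'}} \;=\; \Omega_k\!\left(m\left(\frac{\ell'}{n}\right)^{t}\right).
\]
Plugging in the hypothesis $m \geq C n (n/\ell)^{t-1}$ together with $\ell' = c\ell$ yields $d(K) = \Omega_k(C\ell)$, which exceeds $2$ for $C$ sufficiently large. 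Since $|V(K)| \leq n^{\ell'}$, the irregular Moore bound then produces a cycle in $K$ of length at most $2\lceil \log_{d(K)-1} |V(K)|\rceil = O(\ell \log n / \log \ell) = O(\ell \log n)$, and decoding this cycle yields a multiset of $O(\ell \log n)$ hyperedges summing to $\mathbf{0}$, which after $\F_2$-reduction is the desired even cover.

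The main obstacle I foresee is ensuring the resulting even cover is non-trivial: a priori, a closed walk in $K$ might use each hyperedge label an even number of times, in which case the multiset of labels cancels to the empty cover. Handling this by purely combinatorial means --- rather than via the matrix concentration machinery of~\cite{GuruswamiKM22,HsiehKM23} --- is the technical heart of the argument. A plausible route is to track label multiplicities explicitly: either restrict attention to short cycles that visit each label at most once, or argue by counting or BFS-layer expansion that the short cycles guaranteed by Moore cannot all exhibit purely even label multiplicities. For odd $k$, the construction above breaks because $k/2$ is not an integer and the symmetric-difference step is asymmetric; there I would introduce an auxiliary coordinate so that edges again correspond to well-defined pairs $(S, S \triangle E')$ for a suitably augmented hyperedge $E'$, and then rerun the degree count and Moore application, incurring at most a mild logarithmic loss relative to the even case.
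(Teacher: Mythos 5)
The statement you are proving is Feige's conjecture itself, which the paper does not prove: its Theorem~\ref{thm:evencover} loses a $\log n$ (or $(\log n)^{1/(k+1)}$) factor in the edge count, and the concluding remarks explicitly list closing this gap as open. Your proposal does not close it either, and the place where it fails is exactly the point you flag as "the main obstacle" and then leave unresolved. A closed walk or even a simple cycle in the Kikuchi graph does \emph{not} by itself give an even cover: the labels along the walk can each appear an even number of times, in which case the multiset cancels to the empty set. This is not a corner case but the generic situation --- for any two hyperedges $E,F$ and any $S$ meeting both in $k/2$ vertices, the Kikuchi graph contains the $4$-cycle $S,\, S\oplus E,\, S\oplus E\oplus F,\, S\oplus F$, whose label multiset is $\{E,E,F,F\}$. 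So the girth of the Kikuchi graph is essentially always $4$, independently of whether $\mathcal{H}$ has a short even cover, and the irregular Moore bound of~\cite{AlonHL02} therefore yields no information. What one actually needs is a closed walk in which some \emph{colour} (hyperedge label) appears an odd number of times; the paper obtains this from Lemma~\ref{prop:colouredcycle}, a rainbow-path counting argument in the properly edge-coloured Kikuchi graph. That lemma, however, requires the average degree to be at least on the order of $\log v(G) \approx \ell\log n$, and with only $m \geq Cn(n/\ell)^{k/2-1}$ your own degree computation gives $d(K)=\Omega_k(C\ell)$, which is smaller by precisely a $\log n$ factor. This is exactly why the paper's hypothesis carries the extra $\log n$; your suggested remedies ("restrict to cycles visiting each label once", "argue the short cycles cannot all have even multiplicities") are restatements of the difficulty, not arguments.

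For completeness: your degree calculation for even $k$ and the decoding of a walk with an odd-multiplicity label into an even cover are the same steps as in the paper's Section~\ref{sec:evenk}, so up to the missing odd-multiplicity guarantee you have rediscovered the known argument that gives Theorem~\ref{thm:evencover}(i), not the conjecture. The odd-$k$ paragraph is likewise only a gesture; the paper's treatment requires the bucket decompositions and cleaning lemmas of Section~\ref{sec:cleaning}, the two-coloured vertex set $[n]\times[2]$ in Definition~\ref{def:kikuchi}, and the flower Kikuchi graph of Section~\ref{sec:flowerkikuchi}, and even then it loses a $(\log n)^{1/(k+1)}$ factor. As it stands, your proposal has a genuine gap at its technical heart and does not prove the stated conjecture.
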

\noindent The quantitative behavior above can be verified for random hypergraphs (up to a multiplicative factor of $\log(n)$ in $m$). Indeed, Feige's conjecture was based on the hypothesis that random hypergraphs are approximately extremal for the purpose of avoiding short even covers. The motivation for this conjecture comes from the question of showing existence of (and/or efficiently finding) polynomial size \emph{refutation witnesses} --- easily verifiable witnesses of unsatisfiability of --- randomly chosen $k$-SAT formulas parameterized by the number of clauses. Feige's conjecture implies that the result of Feige, Kim and Ofek~\cite{FeigeKO06} --- which showed that random $3$-SAT formulas with $m \geq O(n^{1.4})$ clauses admit a polynomial size refutation witness with high probability --- will extend to the significantly more general setting of \emph{smoothed} $3$-SAT formulas, in addition to simplifying the construction and arguments based on the second moment method in~\cite{FeigeKO06}.

Until recently, not much was known about Feige's conjecture except for the work of Alon and Feige~\cite{AlonF09} that showed a suboptimal version for the case of $k=3$ and that of Feige and Wagner~\cite{feige2016generalized} that built an approach to the problem of even covers by viewing them as an instance of generalized girth problems about hypergraphs. In 2022, Guruswami, Kothari and Manohar~\cite{GuruswamiKM22} proved Feige's conjecture up to an additional loss of $\log^{2k}(n)$ multiplicative factor in $m$ via a spectral argument applied to the \emph{Kikuchi} graph, a graph with an appropriate algebraic structure, built from the given hypergraph. Their argument was simplified and tightened to reduce the loss down to a $O(\log n)$ multiplicative factor in $m$ by Hsieh, Kothari and Mohanty in~\cite{HsiehKM23}. In this work, as we will soon discuss, we give a substantially simpler, purely combinatorial argument that recovers their result and improves the logarithmic factors for hypergraphs of odd uniformity. 

\paragraph{Linear Locally Decodable Codes} A binary error correcting code is a map $C: \{0,1\}^m \rightarrow \{0,1\}^n$, where we view the input as a $m$-bit ``message'' and the output as an $n$-bit codeword. By a slight abuse of notation, we use $C$ to also denote the set of all \emph{codewords}: $\{y \mid \exists x \in \{0,1\}^m, C(x) = y\}$. We say that $C$ is \emph{linear} if, when viewing the input and output as $\F_2^m$ and $\F_2^n$, respectively, $C$ is a $\F_2$-linear map. A code $C$ is called $(q,\delta)$-locally decodable (LDC) if, in addition, it admits a \emph{local decoding} algorithm. Such a local decoding algorithm takes as input any target message bit $i$ for $1 \leq i \leq m$ and a corrupted codeword $y$ such that $\dist(y,C(x)) \leq \delta$ for some $x \in \{0,1\}^m$, where $\dist$ counts the fraction of coordinates that $y$ and $C(x)$ differ. The goal of the algorithm is to access at most $q$ locations in $y$ and output $x_i$ correctly with high probability over the choice of the $q$ locations. In other words, the local decoder can decode any bit of the message by reading at most $q$ locations of the received corrupted codeword. 

Locally decodable codes are intensely investigated in computer science (see the survey~\cite{Yekhanin12} for background and applications) with applications to probabilistically checkable proofs, private information retrieval~\cite{Yekhanin10}, and worst-case to average-case reductions in computational complexity theory. They also have deep connections with additive combinatorics and incidence geometry~\cite{Dvir12}. We are typically concerned with codes that are locally decodable with very few queries, such as $q=2$ or $3$, and the fundamental question is the smallest possible $n = n(m)$ such that there is a $(q,\delta)$-binary LDC $C:\{0,1\}^m \rightarrow \{0,1\}^n$. Classical results have essentially completely resolved the case of $q=2$ and we know that a blocklength of $n \leq 2^{O(m)}$ (for a constant $\delta$) can be achieved by Hadamard codes with a matching lower bound~\cite{GoldreichKST06,KerendisW04}. The case of $q=3$ already presents wide gaps, where until recently, the best known lower bound~\cite{GoldreichKST06,KerendisW04} was $n \geq \tilde{O}(m^2)$, while the best known construction~\cite{Yekhanin08,Efremenko09} gives a $3$-query binary linear code with $n \leq \exp (\exp (O(\sqrt{\log m \log \log m})))$. Recently, using spectral refutations via Kikuchi matrices, Alrabiah et.~al.~\cite{AlrabiahGKM23} improved the quadratic bound above to obtain a lower bound of $n \geq m^3/ \poly \log m$ for 3-query binary, locally decodable codes. 

\subsection{Our Results}
The arguments in the most recent works on Feige's conjecture and locally decodable codes are involved and in particular, require the use of matrix concentration inequalities. Our main contribution is a short, purely self-contained combinatorial argument that recovers their result. This technique might be useful for other similar questions and we illustrate it below, by an application to another well studied problem. Our arguments for Feige's conjecture also utilize Kikuchi graphs introduced in the above prior works. We then introduce a new variant of a Kikuchi graph that, when combined with our combinatorial argument, improves on their results for the case of odd $k$. 

\begin{thm}\label{thm:evencover}
For all $k$, there is a sufficiently large $C$ such that the following holds for all sufficiently large $n$ and $k \leq l \leq n$:
\begin{enumerate}
    \item[(i)] If $k$ is even, then every $k$-uniform $n$-vertex hypergraph with at least $C n \left(n/l \right)^{k/2-1} \cdot \log n$ hyperedges contains an even cover of size $O(l \log n)$.
    \item [(ii)] If $k$ is odd and $l \leq n /\log^2 n$, then every $k$-uniform $n$-vertex hypergraph with at least $C n \left(n/l \right)^{k/2-1} \cdot (\log n)^{\frac{1}{k+1}}$ hyperedges contains an even cover of size $O(l \log n)$.
\end{enumerate}
\end{thm}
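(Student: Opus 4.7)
My plan is to construct a Kikuchi graph $G$ from the hypergraph $H$ so that short cycles in $G$ translate into short even covers in $H$, and then to extract such a cycle by applying the irregular Moore bound of Alon, Hoory and Linial to $G$. The entire proof thereby reduces to a clean combinatorial lower bound on the average degree of $G$ together with a mechanism for turning a closed walk in $G$ into a non-empty even cover in $H$.

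For even $k = 2t$, I would fix a parameter $r \asymp \ell$ and take the vertex set $V(G) = \binom{[n]}{r}$. For every hyperedge $C \in H$ and every equipartition $C = A \sqcup B$ with $|A| = |B| = t$, add the edge $\{S, S \triangle C\}$ labeled by $C$ for each $S \in V(G)$ with $A \subseteq S$ and $B \cap S = \emptyset$. Then $|V(G)| = \binom{n}{r}$ and the number of edges is $\Theta\bigl(m \cdot \binom{k}{t} \cdot \binom{n-k}{r-t}\bigr)$, giving average degree $d = \Theta(m (r/n)^t)$. The hypothesis $m \geq Cn(n/\ell)^{t-1}\log n$ yields $d = \Theta(\ell \log n)$ while $\log|V(G)| = \Theta(\ell \log (n/\ell))$, so the Moore bound produces a cycle of length at most $L = O(\ell \log(n/\ell) / \log(\ell \log n)) \leq O(\ell \log n)$. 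A closed walk $S_0, \dots, S_L = S_0$ with hyperedge labels $C_1, \dots, C_L$ automatically satisfies $C_1 \oplus \cdots \oplus C_L = \emptyset$, so the set of labels appearing an odd number of times is an even cover of size at most $L$.

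For odd $k = 2t+1$ I would introduce the novel Kikuchi variant announced in the paper: for each hyperedge $C$ distinguish an auxiliary vertex $v \in C$ and equipartition the remaining $2t$ vertices as $C \setminus \{v\} = A \sqcup B$, adding an appropriate edge into $G$ for each valid $S$. The extra freedom in choosing $v$ contributes a large multiplicative factor to the edge count, and by re-balancing the parameter $r$ and the number of allowed choices of $v$ one can squeeze the required bound on $m$ from $\log n$ down to $(\log n)^{1/(k+1)}$; the same Moore-bound argument then produces a cycle of length $O(\ell \log n)$, with the same label extraction procedure as before.

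The main obstacle in both parts is non-triviality: the Moore-bound cycle might use every hyperedge label an even number of times, in which case the extracted even cover is empty. I would expect to handle this by a purely combinatorial argument, for instance by refining the Kikuchi graph so that distinct edges must carry distinct labels (enforcing that closed walks produce non-empty symmetric differences of labels), or by a minimality argument showing that label repetitions along the cycle can be iteratively pared down to either a genuinely non-trivial short cycle or a contradiction with the average-degree bound. Calibrating this combinatorial step so that exactly the hypothesized $\log n$ and $(\log n)^{1/(k+1)}$ factors in $m$ are absorbed is the delicate part of the proof, and is where the novel variant of the Kikuchi graph for odd $k$ is likely to play its decisive role.
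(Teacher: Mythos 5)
Your even-$k$ reduction builds essentially the same Kikuchi graph as the paper, and you correctly isolate the central obstacle --- that a short cycle may use every label an even number of times --- but the fix you leave open is not a technicality: it is the heart of the proof, and the Moore bound cannot be used as a black box here. The Kikuchi graph typically has girth $4$ regardless of the structure of $\mathcal{H}$: for two distinct hyperedges $E_1,E_2$ one can generally choose $S$ so that $S$, $S\oplus E_1$, $S\oplus E_1\oplus E_2$, $S\oplus E_2$ form a $4$-cycle whose labels are $E_1,E_2,E_1,E_2$, which yields the empty cover. Nor can you refine the graph so that distinct edges carry distinct labels: each hyperedge generates on the order of $\binom{k}{k/2}\binom{n-k}{l-k/2}$ equally-labeled Kikuchi edges, and this multiplicity is exactly the density amplification the whole argument depends on. The paper replaces the Moore bound by an edge-colored substitute, Lemma~\ref{prop:colouredcycle}: if every vertex--colour pair meets at most $d(G)/(20s\log n)$ edges, a double count of rainbow paths of length $\log n$ produces a closed walk in which some colour appears \emph{exactly once}, hence a non-empty even cover; properness of the Kikuchi colouring makes this applicable for even $k$. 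Supplying this lemma (or an equivalent device) is precisely the step your proposal does not provide.

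For odd $k$ your sketch has a more basic flaw: if a Kikuchi edge labeled $C$ realizes the symmetric difference $C\setminus\{v\}$ for a distinguished vertex $v$, then along a closed walk one only gets $\bigoplus_i\bigl(C_i\setminus\{v_i\}\bigr)=\emptyset$, so the odd-multiplicity labels need not form an even cover of $\mathcal{H}$ --- the distinguished vertices do not cancel. This is why the paper works either with pairs of hyperedges from a common bucket, so that the shared $t$ vertices cancel (Definition~\ref{def:kikuchi}), or with flower gadgets $(C,P_1,\dots,P_k)$ whose petals absorb every vertex of the center (Section~\ref{sec:flowerkikuchi}); in both cases the resulting colourings are no longer proper, which forces the cleaning steps (Corollary~\ref{cor}, Lemma~\ref{prop:cleaning1}) and the co-degree-controlled edge deletions (Proposition~\ref{prop:kikuchiclean} and its flower analogue) before Lemma~\ref{prop:colouredcycle} can be invoked. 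None of this machinery, nor the actual source of the exponent $\tfrac{1}{k+1}$ (a flower gadget uses $k+1$ hyperedges, so the density enters with exponent $k+1$ against a single $\log n$), appears in your outline, so the odd case remains essentially unproved.
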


\noindent We remark that in the above theorem for odd $k$, we require only a very mild restriction on $l$ that $l \leq n /\log^2 n$, which is purely a technical artefact of our proof. In fact, if $l \geq n/100 \log n$, Feige's conjecture holds trivially from a standard linear algebra argument (see Lemma \ref{lem:linalg}). For $n /\log^2 n \leq l \leq n/100 \log n$, one can use the some arguments as in the proof of the above theorem (namely Section \ref{sec:kikuchi}) to show the same bound as in part (i) for this range.

The ideas we developed to prove the above theorem naturally extend to the setting of \emph{linear} binary locally decodable codes. We can use an altered version of these arguments to show the following result previously obtained by Alrabiah et.~al~\cite{AlrabiahGKM23}. As in the above results, their proof involves spectral arguments on signed adjacency matrices of Kikuchi graphs based on matrix concentration inequalities. The proof we give is simple and purely combinatorial. Although the work in \cite{AlrabiahGKM23} deals also with non-linear codes, we remark that all known constructions of locally decodable codes including the ones discussed in the previous section are linear.

\begin{thm} \label{thm:LDC-non-even-cover}
Let $C:\F_2^m \rightarrow \F_2^n$ be a linear map that gives a $3$-query locally decodable code with distance $\delta>0$. Then, $m \leq K n^{1/3} \log n$ for $K = 10^7/\delta^2$. 
\end{thm}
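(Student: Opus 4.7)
The plan is to extend the combinatorial Kikuchi-graph strategy used to prove \Cref{thm:evencover} to the locally-decodable setting. A standard Katz--Trevisan reduction converts the $3$-query LDC into $m$ matchings $M_1,\ldots, M_m$ of pairwise-disjoint triples in $[n]$, each of size at least $\delta n/10$, such that $C(x)_a + C(x)_b + C(x)_c = x_i$ for every $\{a,b,c\} \in M_i$ and every $x \in \F_2^m$. Summing this identity over any even cover $S$ of the $3$-uniform hypergraph $H = M_1 \cup \cdots \cup M_m$ shows $\sum_j x_{i_j} = 0$ as an identity in $x$, where $(i_j)$ are the labels of the triples of $S$ counted with multiplicity. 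Since $x$ is free, every label must appear an even number of times in any even cover, and so the theorem will follow by exhibiting a short even cover of $H$ with some label of \emph{odd} multiplicity.

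Following \Cref{thm:evencover}(ii) for $k=3$, construct the corresponding edge-labelled Kikuchi graph $G$ at level $\ell = \Theta(\log n)$, with $|V(G)| = O(\binom{n}{\ell})$ and average degree $d = \Omega(\delta m \ell)$, designed so that closed walks of length $2t$ in $G$ are in bijection with ordered sequences of $2t$ labelled triples from $H$ whose vertex XOR is empty. Running the combinatorial Moore-bound argument of Section~\ref{sec:kikuchi} on a min-degree-$\Omega(d)$ subgraph of $G$ produces a closed walk of length $2t \lesssim \ell \log n/\log d = O(\log n)$; more generally, the same argument yields a lower bound of $\Omega((\delta m \ell)^{2t})$ on the number of non-backtracking closed walks of length $2t$ from any fixed vertex.

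To force some such walk to be label-unbalanced, compare counts. A label-balanced closed walk of length $2t$ from a fixed vertex is specified by a sequence $(i_1,\ldots,i_{2t}) \in [m]^{2t}$ with every label of even multiplicity --- at most $m^t(2t)!$ choices --- together with, at each position $j$, a triple from $M_{i_j}$; the number of triples available is at most $O(\delta \ell)$, since the $M_i$-degree of any fixed vertex of $G$ is $O(|M_i|\ell/n) = O(\delta \ell)$. Hence the number of label-balanced walks per starting vertex is at most $m^t(2t)!(O(\delta\ell))^{2t}$, and its ratio to the total walk count is at most $(2t)!/m^t$. For $m \geq Kn^{1/3}\log n$ and $2t = O(\log n)$ this ratio is $o(1)$, so a label-unbalanced closed walk of length $O(\log n)$ must exist, yielding a non-trivial linear identity among the $x_i$'s and contradicting that $C$ is an LDC.

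\noindent\textbf{Main obstacle.} The delicate point is the double-counting above. One has to count \emph{non-backtracking} closed walks (otherwise the trivial ``out-and-back'' walks dominate both counts and are automatically label-balanced), and to precisely extract the gain of $1/m$ per repeated label, which arises from the gap between the total degree $\Theta(\delta m \ell)$ of $G$ and the per-matching degree $\Theta(\delta \ell)$ at a fixed vertex. This factor of $m$ per repeated label is exactly what yields the $n^{1/3}$ exponent in the final bound, and is the purely combinatorial substitute for the matrix-concentration step of the spectral proof in \cite{AlrabiahGKM23}. Handling higher-multiplicity label patterns cleanly and tracking the constants all the way to $K = 10^7/\delta^2$ is where the bulk of the technical work lies.
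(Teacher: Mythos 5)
Your reduction to finding an even cover of $H=\cup_i M_i$ in which some label appears an odd number of times is exactly the paper's first step (Proposition~\ref{prop:normal-form} and Lemma~\ref{lem:LDC-LB-to-single-odd-even-cover}), but the Kikuchi step has genuine gaps. First, for $k=3$ the graph you describe does not exist: if $|S|=|T|=\ell$ then $|S\oplus T|$ is even, so a Kikuchi edge can never satisfy $S\oplus T=E$ for a single triple $E$, and hence there is no graph whose closed walks are in bijection with sequences of labelled triples with empty XOR. This parity obstruction is precisely why the paper's construction for odd $k$ (Definition~\ref{def:kikuchi}) works with \emph{pairs} of hyperedges drawn from a bucket decomposition on two coloured copies of $[n]$ (or with flower gadgets), so that each Kikuchi edge carries a pair of colours; making that work requires the cleaning steps (Lemma~\ref{lem:basiccleaning}, the codegree case analysis, and the edge-deletion inside Proposition~\ref{prop:kikuchiclean}) that your proposal does not address. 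Without them the per-colour degree bound you assert (``$O(\delta\ell)$ edges per matching at a fixed vertex'') can fail, for instance when many matchings contain triples through the same pair of vertices, which is exactly the situation the paper's $(K\log n,2)$-bucket case is designed to absorb.

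Second, the quantitative choices are off. With the correct (paired) construction, Observation~\ref{obs} with $k=3$, $t=1$ gives average degree $\Theta(\delta^2 m^2\ell^2/n)$, not $\Omega(\delta m\ell)$; at your level $\ell=\Theta(\log n)$ and $m\approx n^{1/3}\log n$ this is $o(1)$, i.e.\ the Kikuchi graph is essentially empty and no walk-counting can even begin. The paper instead takes $\ell=n^{1/3}$, which is what makes the average degree exceed $\log|V(G)|=\Theta(\ell\log n)$, and this choice of level -- not a gain of $1/m$ per repeated label -- is the source of the $n^{1/3}$ in the final bound. Finally, your proposed double count of label-balanced versus all non-backtracking closed walks is essentially the trace-method computation underlying the spectral proofs that the paper deliberately avoids; its combinatorial substitute is the rainbow-path count of Lemma~\ref{prop:colouredcycle}, applied only after the cleaning and deletion steps guarantee that every colour is well-spread at every Kikuchi vertex. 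As written, the proposal would need the paired (or flower) construction, the level $n^{1/3}$, and that cleaning machinery before any version of the counting could be carried out.
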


\paragraph{Finding structures in edge-colored graphs}
We finally remark that the results in this paper, which we discussed above, are proven by reducing both our results  to the one of finding a subgraph, satisfying certain restrictions, in an edge-colored graph. Recall that an edge-colored graph is a graph along with a coloring of the edges so that no vertex has two or more edges of same color incident on it. Finding structures in edge-colored subgraphs is a powerful method in Combinatorics and has been used to study several well-known questions. For example, the famous conjecture of Ringel from 1963 says that the edges of the complete graph $K_{2n+1}$ can be decomposed into copies of any tree on $n$ vertices. It was observed by K\"{o}tzig that this problem can be reduced to showing that a certain edge-coloring of $K_{2n+1}$ contains a rainbow copy of any tree on $n$ vertices. The existence of such rainbow trees was recently established in \cite{MPS}. 

Another famous application is the resolution of the Ryser-Brualdi-Stein conjecture which was open for more than 60 years until it was recently solved by Montgomery \cite{montgomery2023proof}. It states that every Latin square $n \times n$ contains a transversal (i.e., a collection of cells which do not share the same row, column or symbol) of size $n-1$. This can be reduced to finding a rainbow matching missing only one color in any proper edge-coloring of the complete bipartite graph $K_{n,n}$. Finally, a well-studied problem in additive combinatorics studies the additive dimension of sets with small doubling, in any group. This problem had been solved for abelian groups by Sanders \cite{sanders}. For general groups, this question was reduced by Alon et.~al \cite{alon2023essentially}
to a well-known problem (see \cite{rainbowturanproblems}) of finding a rainbow cycle in a properly edge-colored graph with sufficiently many edges, for which Alon et.~al provide an almost tight bound.

\section{Preliminaries}
\subsection{Notation and definitions}
\begin{defn}
All logarithms are taken base 2. Given a graph $G$, we let $d(G)$ denote its average degree. Let $\mathcal{H}$ be a hypergraph. We let $V(\mathcal{H})$ denote its vertex set and $E(\mathcal{H})$ its edge-set. Similarly, we let $v(\mathcal{H})$ denote the size of its vertex set and $e(\mathcal{H})$ denote the number of hyperedges. Given a hypergraph $\mathcal{H}$, a \emph{bucket} is a pair $(\mathcal{E},X)$ where $\mathcal{E}$ is a set of hyperedges in $\mathcal{H}$ which all contain the set of vertices $X \subseteq V(\mathcal{H})$. Given a hypergraph $\mathcal{H}$, an \emph{$(m,t)$-bucket decomposition} is a partition of the hyperedges of $\mathcal{H}$ into buckets $(\mathcal{E}_1,X_1), (\mathcal{E}_2,X_2), \ldots$ such that $ |\mathcal{E}_i| = m$ and $|X_i| = t$.
\end{defn}

\subsection{Standard tools}
\noindent In this section, we collect some lemmas which will be useful throughout the paper. The first is the standard tool used for finding subhypergraphs with large minimum degree.
\begin{lem}\label{lem:mindegree}
Let $\mathcal{H}$ be a hypergraph on $n$ vertices and $nd$ hyperedges. Then, there exists an induced sub-hypergraph $\mathcal{H}' \subseteq \mathcal{H}$ with at least $nd/2$ hyperedges and minimum degree at least $d/2$.
\end{lem}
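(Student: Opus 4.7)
The plan is to use the standard iterative low-degree vertex deletion procedure. I would proceed greedily: while the current sub-hypergraph contains a vertex $v$ of degree strictly less than $d/2$, delete $v$ from the vertex set (and correspondingly remove every hyperedge containing $v$). Let $\mathcal{H}'$ denote the induced sub-hypergraph obtained when no such vertex remains. By the termination condition alone, $\mathcal{H}'$ has minimum degree at least $d/2$, so the only thing to verify is the edge count.

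The key step is to bound the total number of edges removed throughout this process. At the moment of its deletion, a vertex $v$ is incident to fewer than $d/2$ edges of the then-current sub-hypergraph, so removing $v$ eliminates strictly fewer than $d/2$ edges. Since we delete at most $n$ vertices in total (there are only $n$ to begin with), the total number of edges removed is strictly less than $n \cdot (d/2) = nd/2$. Therefore $\mathcal{H}'$ retains more than $nd - nd/2 = nd/2$ edges, as required.

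Finally, one should check that the argument does not vacuously terminate with the empty hypergraph: if every vertex were deleted, we would have removed all $nd$ edges, contradicting the bound of strictly less than $nd/2$ removed (assuming $d > 0$; the case $d = 0$ is trivial). Hence $\mathcal{H}'$ is a nonempty induced sub-hypergraph with the two claimed properties. There is no real obstacle here — the only point that requires a moment of care is that the degree threshold $d/2$ in the deletion rule must be strict, so that the bound ``fewer than $d/2$ edges per removed vertex'' is available to cleanly beat the total budget $nd$.
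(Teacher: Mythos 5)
Your proof is correct and follows essentially the same route as the paper: iteratively delete vertices of degree below $d/2$, note that each deletion removes fewer than $d/2$ edges so at most $nd/2$ edges are ever lost, and conclude the surviving sub-hypergraph is nonempty with at least $nd/2$ edges and minimum degree at least $d/2$. No issues.
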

\begin{proof}
We perform the following process. Start with $\mathcal{H}' := \mathcal{H}$, and while it has a vertex $v$ with degree less than $d/2$ in $\mathcal{H}'$, remove $v$ and its incident hyperedges from $\mathcal{H}'$. Notice that at any point, $\mathcal{H}'$ has at least $e(\mathcal{H}) - nd/2 \geq nd/2$ edges and thus, the process must stop. The final $\mathcal{H}'$ is then a sub-hypergraph with at least $nd/2$ edges and minimum degree at least $d/2$.
\end{proof}
\noindent The next lemma is simply an observation on finding trivial bucket decompositions in hypergraphs. 
\begin{lem}
\label{buckets}
Let $\mathcal{H}$ be a hypergraph on $n$ vertices and $nd$ hyperedges. Then, there exists an induced sub-hypergraph $\mathcal{H}' \subseteq \mathcal{H}$ with at least $nd/2$ hyperedges and a $(d/2,1)$-bucket decomposition.
\end{lem}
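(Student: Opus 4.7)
The plan is to construct both $\mathcal{H}'$ and its $(d/2,1)$-bucket decomposition simultaneously via a direct greedy process on $\mathcal{H}$, without even invoking \Cref{lem:mindegree}. I maintain a set $U$ of currently ``uncovered'' edges, initialized as $E(\mathcal{H})$, and a list $\mathcal{B}$ of buckets, initially empty. While there exists a vertex $v \in V(\mathcal{H})$ with at least $d/2$ edges of $U$ incident to it, I pick an arbitrary subset $\mathcal{E}$ of $d/2$ such edges, append the bucket $(\mathcal{E},\{v\})$ to $\mathcal{B}$, and remove $\mathcal{E}$ from $U$. The process terminates since $|U|$ strictly decreases at each step.

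The second step is a clean counting argument. At termination every vertex has $\deg_U(v) \leq d/2 - 1$, so $\sum_{v \in V(\mathcal{H})} \deg_U(v) \leq n(d/2-1)$. Since every hyperedge contains at least one vertex, we get the bound $|U| \leq \sum_v \deg_U(v) < nd/2$. Consequently, the number of edges that have been placed into some bucket is at least $nd - nd/2 = nd/2$. Defining $\mathcal{H}' \subseteq \mathcal{H}$ to be the sub-hypergraph whose edge set is $\bigcup_i \mathcal{E}_i$, the collection $\mathcal{B}$ is by construction a valid $(d/2,1)$-bucket decomposition of $\mathcal{H}'$.

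There is no serious obstacle in the argument; the only point meriting even slight attention is the counting step, where for general (non-uniform) hypergraphs the usual handshake identity $k|U| = \sum_v \deg_U(v)$ is replaced by the weaker inequality $|U| \leq \sum_v \deg_U(v)$. This is still enough to conclude $|U| < nd/2$ and hence to recover the desired $\geq nd/2$ bucketed edges, which is why the statement does not require $\mathcal{H}$ to be uniform.
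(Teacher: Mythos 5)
Your proof is correct and is essentially the paper's argument viewed in reverse: the paper greedily extracts buckets of $d/2$ edges at a maximum-degree vertex and stops once $nd/2$ edges have been collected (using the averaging fact that a hypergraph on $n$ vertices with at least $nd/2$ edges has a vertex of degree at least $d/2$), while you run the same extraction to exhaustion and then count that fewer than $nd/2$ edges can remain — the same counting step, just applied at the end rather than during the process. The minor points you note (non-uniformity only needing $|U|\le\sum_v \deg_U(v)$, and the degrees being strictly below $d/2$ at termination) are handled correctly.
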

\begin{proof}
We do the following process. First, set $\mathcal{H}' := \emptyset$. While $e(\mathcal{H}') < nd/2$, consider the hypergraph $\mathcal{H} \setminus \mathcal{H}'$; it has at least $nd/2$ hyperedges and thus there is a vertex $v$ with degree at least $d/2$; then, take a a collection $\mathcal{E} \subseteq \mathcal{H} \setminus \mathcal{H}'$ of $d/2$ hyperedges containing $v$ and add them to $\mathcal{H}'$. Note that at the end of the process the hypergraph $\mathcal{H}'$ is as desired since the collections $\mathcal{E}$ taken at each step give the $(d/2,1)$-bucket decomposition.
\end{proof}
\noindent The next lemma is a simple application of linear algebra in order to show that hypergraphs with sufficiently many hyperedges contain even covers (with no size restriction).
\begin{lem}\label{lem:linalg}
Let $\mathcal{H}$ be an $n$-vertex hypergraph with at least $n+1$ hyperedges. Then, $\mathcal{H}$ contains an even cover.
\end{lem}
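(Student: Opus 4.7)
The plan is to directly exploit the correspondence, already observed in the introduction of the paper, between even covers in a hypergraph and $\F_2$-linear dependencies among the characteristic vectors of its hyperedges. For each hyperedge $E \in E(\mathcal{H})$, let $v_E \in \F_2^n$ be the indicator vector whose $u$-th coordinate is $1$ precisely when $u \in E$. By definition, a non-empty set $S \subseteq E(\mathcal{H})$ is an even cover exactly when $\sum_{E \in S} v_E = \mathbf{0}$ in $\F_2^n$, since this equation says that every vertex appears in an even number of edges of $S$. So producing an even cover is the same as producing a non-trivial $\F_2$-linear dependence among the vectors $\{v_E : E \in E(\mathcal{H})\}$.

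The hypothesis gives at least $n+1$ such vectors, all lying in the $n$-dimensional $\F_2$-vector space $\F_2^n$. By the standard fact that any $n+1$ vectors in an $n$-dimensional space are linearly dependent, there exist coefficients $\lambda_E \in \F_2$, not all zero, with $\sum_{E} \lambda_E v_E = \mathbf{0}$. Taking $S = \{E : \lambda_E = 1\}$ yields a non-empty set of edges summing to $\mathbf{0}$ over $\F_2$, which is the desired even cover. There is no real obstacle here; the lemma is essentially just a restatement of the dimension bound for $\F_2^n$, and the proof is a one-line linear algebra argument.
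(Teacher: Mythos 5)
Your proposal is correct and is essentially identical to the paper's own proof: both associate to each edge its characteristic vector in $\F_2^n$ and observe that any $n+1$ vectors in this $n$-dimensional space are linearly dependent, with the support of the dependence giving the even cover. No differences worth noting.
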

\begin{proof}
For each hyperedge $E$ in the hypergraph consider its characteristic vector $v_E$ in $\mathbb{Z}^n_2$. Note that since we have at least $n+1$ vectors, they must be linearly dependent, thus implying an even cover.
\end{proof}
\noindent The next lemma concerns finding colored cycles in edge-colored graphs. Although a variant of it already appeared in \cite{rainbowturanproblems}, we include a the proof for sake of completeness.
\begin{lem}\label{prop:coloredcycle}
Let $G$ be an $n$-vertex graph and $C$ a set of colors so that each edge in $G$ is assigned a set of $s$ colors in $C$. Suppose that for every vertex $v \in G$ and color $c \in C$, the number of edges incident on $v$ whose assigned set of colors contains $c$ is at most $d(G)/20s \log n$. Then, $G$ contains a closed walk, of size at most $2\log n$, such that some color appears exactly once.
\end{lem}
\begin{proof}
Let $l := \log n$ and $r := d(G)/20s \log n$. For sake of contradiction, suppose $G$ contains no such closed walk of size $2l$. We will double-count the number of rainbow paths in $G$ of size $l$ - by \emph{rainbow} we mean that no color is assigned to more than one edge of the path. In order to give a lower bound on the number of rainbow paths, note first that Lemma \ref{lem:mindegree} implies that $G$ contains a subgraph $G' \subseteq G$ with minimum degree at least $d' = d(G)/4$. We can then take a vertex $v \in G'$ and greedily count the number of rainbow paths in $G'$ of the form $vv_2v_3 \ldots v_{l+1}$. Indeed, we have at least $d'$ options for $v_2$; then, given the assumption on $G$ that every color is incident to a vertex in at most $r$ edges and that $G$ has no rainbow cycle of size at most $2l$, we have at least $d' - sr$ options for $v_3$; with the same reasoning we have at least $d' - 2sr$ options for $v_4$ and so on. Concluding the number of such rainbow paths is at least $d' \cdot (d' - sr) \cdot \ldots \cdot (d' - sr(l-1)) > (0.8d')^l$, since $d' - sr(l-1) > d' - d(G)/20 \geq 0.8d'$. 

On the other hand, since there is no closed walk as described above, we can upper bound the number of rainbow paths as follows. For each pair of vertices $x,y$ (there are at most $n^2$ of them), observe that if there is a rainbow path between them using a set of colors $S\subset C$, then any other rainbow path between $x$ and $y$ must use the same set of colors $S$. Now, the set $S$ has size $sl$ and thus, we can upper bound the number of rainbow paths $xz_1z_2 \ldots z_{l}y$ of size $l$ between $x$ and $y$ as follows. Choosing $z_1$ can be done by choosing a color $c_1 \in S$ and then choosing an edge incident on $v$ which contains the color $c_1$. By assumption, there are at most $slr$ such choices and the choice of the rest of the vertices $z_2, \ldots, z_{l-1}$ can be done in the same way. Concluding, there are at most $(slr)^{l}$ such paths. Therefore, there are at most $n^2 \cdot (slr)^{l} = (4slr)^l = (0.8d')^l$ rainbow paths of size $l$, which is a contradiction given the previous paragraph.
\end{proof}

\subsection{Combinatorial Characterization of Locally Decodable Codes} \label{subsec:LDC-to-even-covers}
In this section, we recall standard results that reduce proving a lower bound on the blocklength of a linear locally decodable code to establishing the existence of a special kind of even cover in a properly edge colored hypergraph. 

We will use the following standard reduction from any locally decodable code to one in the \emph{normal} form (with only constant factor differences in the parameters). We direct the reader the monograph~\cite{Yekhanin12} for a more general result that holds even for non-linear codes. We use $\oplus$ to denote addition modulo 2 (i.e., over $\F_2^n$) in the following. 

\begin{prop}[LDCs in normal form, see Theorem 8.1 in~\cite{DvirNotes}] \label{prop:normal-form}
Let $C:\F_2^m \rightarrow \F_2^n$ be a binary, linear $3$-query locally decodable code with distance $\delta>0$. Then, there is a collection of $3$-uniform hypergraph matchings $H_1, H_2, \ldots, H_m$ of size $\geq \delta n/6$ one for each message bit, such that for every $x \in \F_2^m$, for every $i \in [m]$ and every hyperedge $e \in H_i$, $\oplus_{j \in e} C(x)_j = x_i$. 
\end{prop}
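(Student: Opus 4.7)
The plan is to follow the standard three-stage reduction from a linear 3-query LDC to normal form: smooth the decoder, linearize its output, then extract a matching. Since the proposition is stated as a known result (attributed to Theorem 8.1 in \cite{DvirNotes}), the goal is to sketch the classical Katz--Trevisan style argument, specialized to the linear 3-query setting where things simplify considerably.

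First I would \emph{smooth the decoder}. Let $D(y,i)$ denote the local decoder, and recall that because $D$ must be correct with constant advantage over $\tfrac{1}{2}$ on every codeword corrupted in at most $\delta n$ locations, no single coordinate $j \in [n]$ can be queried by $D$ with probability much more than $1/(\delta n)$: if it were, an adversary could corrupt the heavy coordinates and defeat $D$. Restricting $D$ to the random coins for which each coordinate is queried with probability $O(1/n)$ and rebalancing the output yields a smooth decoder $\widehat{D}$ that queries each coordinate with probability $O(1/n)$ and recovers $x_i$ from the uncorrupted codeword $C(x)$ with probability at least $\tfrac{1}{2}+\Omega(\delta)$ over its internal randomness.

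Next I would \emph{linearize}. Because $C$ is $\F_2$-linear, each $C(x)_j$ is a linear functional of $x$, and any rule predicting $x_i$ from the three queried bits $(C(x)_{j_1},C(x)_{j_2},C(x)_{j_3})$ for \emph{every} $x$ must itself be one of the $8$ affine functions $\F_2^3 \to \F_2$. An affine function that is not an exact identity $x_i = \bigoplus_{\ell} C(x)_{j_\ell}$ (possibly XOR a constant) is wrong on at least half of the codewords, so correctness with advantage $\Omega(\delta)$ forces $\widehat{D}$ to place probability $\Omega(\delta)$ on triples that \emph{exactly} satisfy $\bigoplus_{j \in e} C(x)_j = x_i$ identically in $x$. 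Pigeonholing on the (constantly many) affine patterns and discarding triples whose affine relation uses fewer than three distinct coordinates (which can be ruled out via the positive distance of $C$, up to constant factors) produces a $3$-uniform hypergraph $\widetilde{H}_i$ on $[n]$ with $\Omega(\delta n)$ edges, each edge $e$ satisfying the claimed XOR identity.

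Finally I would \emph{extract a matching}. Smoothness of $\widehat{D}$ guarantees each coordinate lies in at most $O(1)$ edges of $\widetilde{H}_i$, so a trivial greedy procedure --- pick an edge, delete it together with the $O(1)$ edges sharing any vertex, and repeat --- yields a matching $H_i \subseteq \widetilde{H}_i$ of size at least $|\widetilde{H}_i|/O(1) \geq \delta n/6$ after tracking constants, which is exactly what the proposition demands. The main obstacle is the linearization step: one must carefully account for all $8$ affine decoding rules and show that the degenerate ones (which correspond to $1$- or $2$-query decoders masquerading inside a $3$-query decoder) can be eliminated using the distance of $C$ without sacrificing the linear-in-$\delta n$ number of surviving triples. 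The smoothing and greedy matching extraction are standard and mechanical by comparison.
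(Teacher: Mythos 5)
The paper itself does not prove this proposition: it is imported as a known reduction, with a pointer to Theorem 8.1 of Dvir's lecture notes, so the only benchmark is that standard smoothing / linearization / matching-extraction argument, which is indeed the route you outline. At the level of the overall plan you are on target, but two of your steps fail as justified. In the linearization step, the decoder's rule on the three queried bits need not be affine, and "the rule must be one of the $8$ affine functions" is not the right reason (there are $16$ affine functions on three bits in any case). The correct argument is information-theoretic: if the functional $x \mapsto x_i$ does not lie in the $\F_2$-span of the queried coordinate functionals $x \mapsto C(x)_j$, $j \in Q$, then for uniform $x$ the bit $x_i$ is independent of the queried bits, so that query set contributes success probability exactly $1/2$; hence the decoder's advantage must be carried by query sets $Q$ admitting an exact identity $x_i = \oplus_{j \in S} C(x)_j$ for some $S \subseteq Q$.

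The more serious gap is your treatment of the degenerate relations: the sets $S$ above have size $1$, $2$ or $3$, and your claim that sizes $1$ and $2$ "can be ruled out via the positive distance of $C$" is simply false --- distance does not prevent $x_i$ from coinciding with a single codeword coordinate (systematic linear codes do exactly this) or with the XOR of two coordinates. Standard formulations of the normal form deal with this explicitly, e.g.\ by allowing hyperedges of size at most $3$, by padding short relations to size exactly $3$ using auxiliary identically-zero coordinates appended to the codeword, or by dispatching the case where most of the decoder's mass sits on sets of size at most $2$ via the known exponential $2$-query lower bound; your sketch has no mechanism for this, and without one you do not get the $3$-uniform matchings the proposition (and its later use in Theorem \ref{thm:LDC}, where each colour class is a $3$-uniform matching) requires. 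A smaller repair: smoothness bounds the probability mass the decoder places on each coordinate by $O(1/n)$, not the number of distinct good triples through a coordinate, so the matching should be extracted by the usual mass-based greedy argument (each chosen triple removes at most $3\cdot O(1/n)$ of the surviving mass, giving $\Omega(\delta n)$ disjoint triples), rather than by asserting that every coordinate lies in $O(1)$ edges of $\widetilde{H}_i$.
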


\noindent As an immediate corollary of the normal form of a locally decodable code, we obtain the following useful combinatorial characterization:

\begin{lem} \label{lem:LDC-LB-to-single-odd-even-cover}
Let $C:\F_2^m \rightarrow \F_2^n$ be a binary, linear $3$-query locally decodable code with distance $\delta>0$ and let $H_1, H_2, \ldots, H_m$ be the associated matchings as in Proposition~\ref{prop:normal-form} of size $\geq \delta n$. Then, for any even cover $\{e_1, e_2, \ldots, e_t\}$ in the multi-hypergraph $\cup_{i \leq [m]} H_i$, each $H_i$ contributes an even number of hyperedges. 
\end{lem}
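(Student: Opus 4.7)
The plan is to exploit the defining identity of the normal-form matchings from \Cref{prop:normal-form}: for every message $x \in \F_2^m$, every $i \in [m]$ and every hyperedge $e \in H_i$ we have $\bigoplus_{j \in e} C(x)_j = x_i$. Our task reduces to summing this identity over the hyperedges appearing in the given even cover and comparing the two sides.

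Concretely, I would proceed as follows. For each edge $e_l$ in the even cover $\{e_1, \ldots, e_t\}$, write $i(e_l)$ for an index $i$ such that $e_l \in H_i$ (choosing one if several apply, remembering that we are working in the multi-hypergraph $\cup_i H_i$, so each occurrence of $e_l$ comes tagged with its matching). Summing the normal-form identity over all $e_l$ yields
\[
\bigoplus_{l=1}^t \bigoplus_{j \in e_l} C(x)_j \;=\; \bigoplus_{l=1}^t x_{i(e_l)} \qquad \text{for every } x \in \F_2^m.
\]
Because $\{e_1, \ldots, e_t\}$ is an even cover, every coordinate $j$ appears in an even number of the $e_l$'s, so the left-hand side collapses to $0 \in \F_2$. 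Regrouping the right-hand side by the value of $i(e_l)$, if we let $n_i$ denote the number of edges of the cover that are contributed by $H_i$, we obtain $\bigoplus_{i=1}^m (n_i \bmod 2)\, x_i = 0$ for every $x \in \F_2^m$.

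Finally, since this linear identity in $x_1, \ldots, x_m$ holds for every assignment, each coefficient $n_i \bmod 2$ must vanish, so each $n_i$ is even, which is the claim. The only mild subtlety to address is that the union $\cup_i H_i$ is a multi-hypergraph: the same 3-element set could lie in several $H_i$'s, and in the even cover we must count each occurrence according to the $H_i$ it is drawn from. Making this bookkeeping explicit at the outset (i.e., thinking of each edge of the cover together with the label of the matching that contributed it) is what guarantees that the tag $i(e_l)$ is well-defined and that the counts $n_i$ partition the cover. Beyond this clarification, no further combinatorial or algebraic machinery is needed; the argument is essentially one line of $\F_2$-linear algebra applied to the normal-form identity.
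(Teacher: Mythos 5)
Your proposal is correct and is essentially the argument in the paper: both rest on summing the normal-form identity $\bigoplus_{j \in e} C(x)_j = x_i$ over the edges of the cover and using that an even cover makes the left-hand side vanish. The paper phrases it as a contradiction after plugging in a single standard basis vector $x=(1,0,\ldots,0)$, whereas you keep $x$ general and read off that every coefficient $n_i \bmod 2$ must be zero --- the same one-line $\F_2$ computation.
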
 
\begin{proof}
Suppose not and say $\{e_1, e_2, \ldots, e_t\}$ is an even cover that contains an odd number of hyperedges from, say, $H_1$. Then, choose $x = (1,0,\ldots,0) \in \F_2^m$ and observe that on the one hand, $\oplus_{i \leq t} \oplus_{j \in e_i} C(x)_j = 0$ because $C_i$s form an even cover. But, on the other hand, $\oplus_{i \leq t} \oplus_{j \in e_i} C(x)_j = C(x)_1 = 1$ since $H_1$ contributes an odd number of hyperedges to $\{e_1, e_2, \ldots, e_t\}$ and for every hyperedge $e \in H_1$, $\oplus_{j \in e} C(x)_j = x_1$. This is a contradiction. 
\end{proof}

\noindent By thinking of each hyperedge in $H_i$ as having the color $i$,  the hypergraph $\cup_{i \leq m} H_i$ is a properly edge colored hypergraph. Thus, in light of Lemma~\ref{lem:LDC-LB-to-single-odd-even-cover}, Theorem~\ref{thm:LDC-non-even-cover} is implied by the following theorem that we will establish in this work. 

\begin{thm}\label{thm:LDC}
For all sufficiently small $\alpha > 0$, there is $K := 10^7/\alpha^2 >0$ such that the following holds for all sufficiently large $n$. Let $\mathcal{H}$ be a $3$-uniform $n$-vertex hypergraph which is properly edge-colored with $Kn^{1/3} \log n$ colors so that each color has at least $\alpha n$ hyperedges. Then, $\mathcal{H}$ contains an even cover such that some color appears exactly once.
\end{thm}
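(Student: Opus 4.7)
My plan is to apply Lemma~\ref{prop:colouredcycle} to an auxiliary edge-coloured graph $G$ built from $\mathcal{H}$: a short closed walk in $G$ with some colour appearing exactly once should translate into the desired even cover in $\mathcal{H}$ where that colour appears exactly once. The design of $G$ must satisfy three properties: (i) closed walks in $G$ correspond to multisets of hyperedges of $\mathcal{H}$ that XOR to $\mathbf{0}$ (i.e.\ even covers after reducing multiplicities mod $2$), (ii) each $G$-edge is labelled by colours in $[m]$ so that the edge-colouring of $\mathcal{H}$ lifts to $G$, and (iii) a colour appearing exactly once in the walk forces exactly one hyperedge of that colour into the resulting even cover.

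Since the uniformity $k = 3$ is odd, the symmetric Kikuchi graph on $\binom{[n]}{\ell}$ with edges $\{S, S \triangle e\}$ does not preserve subset size, so I would instead lift $\mathcal{H}$ to a $6$-uniform auxiliary hypergraph $\mathcal{H}'$ whose hyperedges are disjoint pairs $e \cup f$ with $e \in H_i$, $f \in H_j$, $i \ne j$. Since each $|H_i| \geq \alpha n$ and each $H_i$ is a matching, any $e \in H_i$ is disjoint from all but at most $3$ edges of any other colour class, yielding $\Theta(\alpha^2 n^2)$ pairs per colour-pair $\{i,j\}$ and $\Theta(M^2\alpha^2 n^2)$ total $\mathcal{H}'$-hyperedges. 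The graph $G$ is then the standard $6$-uniform Kikuchi graph on $V(G) = \binom{[n]}{\ell}$: for each $e \cup f \in \mathcal{H}'$ and each $S$ with $|S \cap (e \cup f)| = 3$, add a $G$-edge $\{S, S \triangle (e \cup f)\}$ labelled with the colour set $\{c(e), c(f)\}$ (so $s = 2$ in Lemma~\ref{prop:colouredcycle}). A closed walk in $G$ of length $t$ yields a multiset of $2t$ hyperedges of $\mathcal{H}$ summing to $\mathbf{0}$; if a colour $c^{*}$ appears in exactly one $G$-edge of the walk, then exactly one $H_{c^{*}}$-hyperedge is used (with multiplicity $1$), so the resulting even cover in $\mathcal{H}$ contains exactly one hyperedge of colour $c^{*}$, as required.

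To invoke Lemma~\ref{prop:colouredcycle}, I would choose $\ell = \Theta(n^{1/3})$ and verify the degree hypothesis. The average degree of $G$ is $d(G) = \Theta(M^2 \alpha^2 \ell^3 / n)$, obtained from the number of $\mathcal{H}'$-hyperedges times $\binom{6}{3}\binom{n-6}{\ell-3}$ Kikuchi edges per $\mathcal{H}'$-hyperedge, divided by $|V(G)| = \binom{n}{\ell}$. The maximum colour-degree $r$ at a vertex $S$ for a colour $i$ is the number of disjoint cross-colour pairs $(e, f)$ with $e \in H_i$ and $|S \cap (e \cup f)| = 3$, which I would bound by case analysis on $a = |S \cap e|$ using the matching structure of $H_i$ (each vertex, pair, or triple of $S$ lies in at most one $H_i$-edge) together with a preprocessing step via Lemma~\ref{lem:mindegree} and Lemma~\ref{buckets} on $\bigcup_{j \ne i} H_j$ to control its maximum codegree. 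With $M \geq K n^{1/3}\log n$ and $K = 10^7/\alpha^2$, this yields $r \leq d(G)/(40\log|V(G)|)$, and Lemma~\ref{prop:colouredcycle} then produces a closed walk of length $O(\ell \log n)$ with some colour appearing exactly once; translating it back through the Kikuchi correspondence completes the proof.

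The main obstacle is the max colour-degree bound: the naive count can blow up because many non-$H_i$ hyperedges may sit almost entirely inside $S$, producing large codegree contributions in the cases $a \in \{0,1\}$. Resolving this carefully---via a bucket-decomposition restricted to the hypergraph $\bigcup_{j \ne i} H_j$ combined with the above case analysis---is the technical heart of the argument and is where the purely combinatorial framework of the present work replaces the spectral refutation and matrix-concentration arguments of Alrabiah et al.~\cite{AlrabiahGKM23}.
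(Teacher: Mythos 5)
Your high-level strategy (lift to an auxiliary structure, build a pair-coloured Kikuchi graph, invoke Lemma~\ref{prop:colouredcycle} with $s=2$, translate the closed walk back) matches the shape of the paper's argument, but the particular auxiliary hypergraph you choose does not work, and the gap is not the one you flag.

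The fatal issue is in the case $|S\cap e|=3$. If $e\in H_i$ satisfies $e\subseteq S$, then for \emph{every} hyperedge $f$ of another colour with $f\cap S=\emptyset$ the pair $(e,f)$ is in your $\mathcal{H}'$ and yields a Kikuchi edge incident to $S$ whose colour pair contains $i$ (the balanced condition $|S\cap(e\cup f)|=3$ is satisfied with the split $3+0$). Since the colour classes are matchings of size at least $\alpha n$ and $\ell=\Theta(n^{1/3})\ll n$, there are $\Theta(M\alpha n)=\Theta(K\alpha n^{4/3}\log n)$ such $f$. But with your parameters $d(G)=\Theta(K^2\alpha^2\ell^3\log^2 n/n)=\Theta(K^2\alpha^2 n^{2/3}\log^2 n)$ and $\log|V(G)|=\Theta(n^{1/3}\log n)$, so the bound you need is $d(G)/(40s\log|V(G)|)=\Theta(K^2\alpha^2 n^{1/3}\log n)$, which is smaller by a factor of roughly $n/(K\alpha)$. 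The codegree preprocessing you propose on $\bigcup_{j\ne i}H_j$ cannot repair this: the offending pairings involve $f$ completely disjoint from $S$ and $e$, so no codegree concentration is at play.

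The reason the paper's construction avoids this is twofold, and both pieces matter. First, the paper pairs hyperedges only within a common \emph{bucket} (a shared vertex or vertex-pair $X_i$), which is already far sparser than all cross-colour disjoint pairs and is what makes the colour-degree analysis (via the auxiliary hypergraphs $\mathcal{H}_c$ and Proposition~\ref{prop:kikuchiclean}) tractable. Second, and more directly to the point of your gap, the paper's Kikuchi graph is built on $\binom{[n]\times[2]}{\ell}$ and enforces a \emph{balanced split of each constituent hyperedge separately}: $|\tilde C^{(1)}\cap S^{(1)}|$ and $|\tilde C'^{(2)}\cap S^{(2)}|$ are each pinned to $\lfloor(k-t)/2\rfloor$ or $\lceil(k-t)/2\rceil$. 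Your construction balances only the union $e\cup f$, so the degenerate split $(3,0)$ is allowed and produces the blow-up. Pairing disjoint hyperedges of odd uniformity and balancing only the union is precisely the trap that the bipartite (red/blue) Kikuchi graph is designed to escape, and this is the idea you would need to import to make your route go through.
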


\section{Even \texorpdfstring{$k$}{k}, and some ideas for odd \texorpdfstring{$k$}{k}}
\subsection{Feige's problem for even \texorpdfstring{$k$}{k}}\label{sec:evenk}
To illustrate our main ideas, we start this section by giving a very short solution of Feige's problem up to a logarithmic factor for all even $k$. In the next section we will discuss how to use the methods inspired by this proof for the odd values of $k$. 

\begin{proof}[ of Theorem \ref{thm:evencover} (i)]
Let $\mathcal{H}$ be a hypergraph satisfying the assertion of the theorem. By choosing constant $C=10^k$ we can assume that $\mathcal{H}$ has at least
$10^k n \left(n/l \right)^{k/2-1}  \log n$ edges. We define an edge-colored graph $G$, called the \emph{Kikuchi graph}, as follows.
\begin{itemize}
    \item The vertex set of $G$ consists of all $l$-element subsets of the vertex of $\mathcal{H}$, i.e.,  $V(G) := \binom{[n]}{l}$.
    \item We define an edge $S \xleftrightarrow{} T$ for two $S,T \in V $ if there exists an edge $E \in \mathcal{H}$ such that $S \oplus T = E$ and $|S \cap E| = |T \cap E| = k/2$. Moreover, we color the edge $S \xleftrightarrow{} T$ in $G$ with color $E$.
\end{itemize}

\noindent 
\begin{figure}[ht]
\begin{tikzpicture}[scale = 3]

\node[scale=1, color=black] at (-1.7,0.3){$1$};
\node[scale=1, color=black] at (-1.7,0.7){$2$};    \node[scale=1, color=black] at (-1.3,0.7){$3$};    \node[scale=1, color=black] at (-1.3,0.3){$4$};
\node[scale=1.5, color=black] at (-1.5,0.5){$E$};

\draw [rounded corners=10mm] (-1.9,0.1)--(-1.9,0.9)--(-1.1,0.9)--(-1.1,0.1) -- cycle;

    \node[scale=1, color=black] at (0.4,-0.1){$1$};
    \node[scale=1, color=black] at (0,1.1){$3$};
    \node[scale=1, color=black] at (0.4,1.1){$2$};
    \node[scale=1, color=black] at (0,-0.1){$4$};

    \draw[rounded corners] (-0.1,1.2) rectangle (0.5,-0);
    \draw[rounded corners] (-0.1,1) rectangle (0.5,-0.2);
    \node[scale=1.5, color=black] at (0.2,0.5){$S \cap T$};
    \node[scale=1.5, color=black] at (0.2,1.4){$S$};
    \node[scale=1.5, color=black] at (0.2,-0.4){$T$};

\end{tikzpicture}
\caption{
    An illustration of the Kikuchi graph defined above with an edge $E = \{1,2,3,4\}$ of $\mathcal
    {H}$ and an edge $S \xleftrightarrow{} T$ of $G$ colored with $E$.}
    
\end{figure}
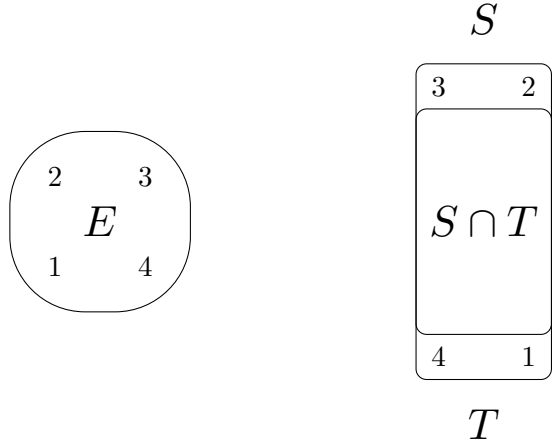

\noindent Let us make some important remarks about $G$. First, note that the coloring of its edges is \emph{well-defined} and \emph{proper}. Indeed, the first is the case since the color of the edge $S \xleftrightarrow{} T$ is uniquely defined by $S \oplus T$. The latter also holds since given $S \in V$ and $E \in \mathcal{H}$, there exists at most one $T \in V$ such that $S \oplus T = E$. We now show the following claim.
\begin{claim}
$G$ contains a closed walk of length $O(l \log n)$ such that some color appears only once.
\end{claim}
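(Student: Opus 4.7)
The plan is to apply Lemma~\ref{prop:colouredcycle} directly to the Kikuchi graph $G$ with $s=1$, viewing each edge as assigned the single colour $E = S \oplus T$. A closed walk of length $L$ in $G$ corresponds to a multiset of hyperedges of $\mathcal{H}$ whose symmetric difference (over $\F_2$) is $\emptyset$; if some colour appears exactly once in the walk, the set of colours appearing an odd number of times is a non-empty even cover of $\mathcal{H}$ of size at most $L$, which will give the claim once $L = O(l\log n)$.

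First I would verify that $G$ is properly edge-coloured, so that the hypothesis of Lemma~\ref{prop:colouredcycle} with $s=1$ simplifies to $d(G)\ge 20\log|V(G)|$. Given $S\in V(G)$ and a colour $E$, the other endpoint must equal $T = S\oplus E$, and $|T|=l$ forces $|S\cap E|=k/2$; thus at most one edge of each colour is incident to $S$.

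Next I would lower bound $d(G)$ by double counting. The degree of $S$ is $|\{E\in E(\mathcal{H}): |S\cap E|=k/2\}|$, so
\[
N\cdot d(G) \;=\; \sum_{S\in V(G)}\deg(S) \;=\; \sum_{E\in E(\mathcal{H})}\binom{k}{k/2}\binom{n-k}{l-k/2} \;=\; m\binom{k}{k/2}\binom{n-k}{l-k/2},
\]
where $N=\binom{n}{l}$. A short manipulation gives the clean identity
\[
\frac{\binom{n-k}{l-k/2}}{\binom{n}{l}} \;=\; \frac{\binom{l}{k/2}\binom{n-l}{k/2}}{\binom{k}{k/2}\binom{n}{k}},
\]
so $d(G) = m\binom{l}{k/2}\binom{n-l}{k/2}/\binom{n}{k}$. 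By Lemma~\ref{lem:linalg} we may assume $l\le n/100\log n$ (otherwise an even cover exists trivially), and in this range standard binomial estimates give $d(G)\ge c_k\, m\,(l/n)^{k/2}$ for a constant $c_k>0$ depending only on $k$. Plugging in the hypothesis $m\ge 10^k n(n/l)^{k/2-1}\log n$ yields $d(G)\ge c_k\cdot 10^k\cdot l\log n$, which (for $C$ sufficiently large) exceeds $20\log N$ since $\log N \le l\log n$.

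Finally, applying Lemma~\ref{prop:colouredcycle} to $G$ with $s=1$ produces a closed walk in $G$ of length at most $2\log N \le 2l\log n = O(l\log n)$ in which some colour appears exactly once, as desired. There is no serious obstacle: the only technical point is the average-degree estimate, which reduces to the identity above and a straightforward binomial comparison in the regime $l\le n/100\log n$.
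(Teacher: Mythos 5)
Your proof is correct and takes essentially the same route as the paper: check that the Kikuchi graph is properly coloured, lower-bound its average degree by $20\log N$, and apply Lemma \ref{prop:colouredcycle} with $s=1$, extracting the odd-multiplicity colours afterwards. The only difference is cosmetic---you organize the degree count via the exact identity $\binom{n-k}{l-k/2}/\binom{n}{l}=\binom{l}{k/2}\binom{n-l}{k/2}/\bigl(\binom{k}{k/2}\binom{n}{k}\bigr)$ where the paper uses the direct estimate $\binom{n}{l}\le\binom{n-k}{l-k/2}\cdot(4n/l)^{k/2}$; both computations rest on the same standing reductions ($l\ge k$ and $l\le n/\log n$ via Lemma \ref{lem:linalg}) already in force in the theorem's proof.
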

\begin{proof}
To prove this we apply Lemma \ref{prop:coloredcycle}. Note that $G$ has $N := {n \choose l}$ vertices and each edge $E \in \mathcal{H}$ creates at least $\frac{1}{2} \cdot \binom{k}{k/2} \cdot \binom{n-k}{l-k/2}$ edges of $G$, since we need to choose the intersections $S \cap E, T \cap E$ (which are disjoint and of size $k/2$) and then the set $S \setminus E=T \setminus E$. Therefore, the number of edges in $G$ is 
$e(G) \geq \frac{1}{2} \cdot e(\mathcal{H}) \binom{k}{k/2} \binom{n-k}{l-k/2}$. Since $N = \binom{n}{l}$ we have $\log N \leq \log n^l=l \log n$ and 
\begin{align*}
N &= \binom{n}{l}
= \binom{n-k}{l-k/2} \cdot \frac{(n-k+1)\cdots n}{(l-k/2+1) \cdots l \cdot (n-l) \cdots (n-l-k/2+1)}\\
&\leq \binom{n-k}{l-k/2} \cdot (4n/l)^{k/2},
\end{align*}
where we used that $l \leq n/ \log n$ by Lemma \ref{lem:linalg} and that $l \geq k$. 
Therefore 
\begin{eqnarray*}
e(G)& \geq & \frac{1}{2} \cdot e(\mathcal{H}) \binom{k}{k/2} \binom{n-k}{l-k/2}\\
&\geq & \frac{1}{2} \cdot 10^k n \left(n/l \right)^{k/2-1} \log n \cdot \binom{k}{k/2} \cdot \binom{n-k}{l-k/2}\\ 
&\geq&   5^k  (4n/l)^{k/2} \binom{n-k}{l-k/2} \big(l \log n\big) >10N\log N.
\end{eqnarray*}
Therefore, since $G$ is properly-colored, Lemma \ref{prop:coloredcycle} implies that it contains the desired type of closed walk of size at most $2\log N \leq 2l \log n$.
\end{proof}

To finish, we can take the closed walk $S_1,S_2, \ldots, S_r$ in $G$ given above and consider the collection $\mathcal{C}$ of edges in $\mathcal{H}$ which appear an odd number of times as colors in this walk - that is, recalling the definition of $G$, those edges $E \in \mathcal{H}$ such that $E = S_i \oplus S_{i+1}$ (the indices being taken modulo $r$) for an odd number of $1 \leq i \leq r$. Since $\bigoplus_{1 \leq i \leq r} (S_i \oplus S_{i+1}) = \emptyset$, we have that $\bigoplus_{E \in \mathcal{C}} E = \emptyset$ and so $\mathcal{C}$ is an even cover of size at most $r = O(l \log n)$, as desired.
\end{proof}
\subsection{Feige's problem for odd \texorpdfstring{$k$}{k} and LDCs}\label{sec:oddkoutline}
\noindent We now briefly discuss the methods for proving Theorem \ref{thm:evencover} in full generality. In the previous section, in order to solve Feige's problem for even $k$, we defined an edge-colored graph $G$ based on the hypergraph $\mathcal{H}$. Crucially, the graph $G$ had the following properties.
\begin{enumerate}
    \item $G$ has many edges and is properly edge-colored.
    \item A closed walk in $G$ in which some color appears once implies an even cover in $\mathcal{H}$.
\end{enumerate}
\noindent Then we just applied Lemma \ref{prop:coloredcycle} to find a the desired even cover in $\mathcal{H}$. 

\vspace{0.2cm}
\noindent \textbf{Even covers with odd $k$}
\vspace{0.2cm}

\noindent In the case of odd $k$ our general strategy will be similar. We will also want to define an edge-colored graph $G$ based on our hypergraph $\mathcal{H}$ such that some appropriate versions of the properties above hold. For this, we will define two different variants of the Kikuchi graph defined in the previous section. The first version is given in Section \ref{sec:kikuchi} and has already appeared in previous works (see \cite{GuruswamiKM22} and \cite{HsiehKM23}) and can alone, be used to show a bound as in part (i) of Theorem \ref{thm:evencover} for odd values of $k$. To get the improved part (ii) of Theorem \ref{thm:evencover} for odd $k$, another variant is necessary. In Section \ref{sec:flowerkikuchi}, we introduce then the \emph{flower Kikuchi graph}. 

Before using these Kikuchi graphs however, we will need to \emph{clean} our hypergraph $\mathcal{H}$ using the tools in Section \ref{sec:cleaning}. Specifically, we will be able to reduce our problem to considering a hypergraph $\mathcal{H}$ with a nice bucket decomposition and co-degree assumptions (see Lemma \ref{prop:cleaning1}). Then, based on the outcome of this Lemma, we will either use the Kikuchi graph on $\mathcal{H}$ defined in Section \ref{sec:kikuchi} or the new flower Kikuchi graph.

In the first case, the graph is edge-colored, but each edge receives now a pair of colors $(C,C')$, where each color $C$ is an edge in $\mathcal{H}$. Due to the additional assumptions we now have on the hypergraph $\mathcal{H}$ we can show that the Kikuchi graph contains a subgraph $G' \subseteq G$ with the following properties similar to the ones before.
\begin{enumerate}
    \item $G'$ has many edges and is such that every vertex is incident to at most $d(G')/80 \log |G'|$ edges containing a given color.
    \item A closed walk in $G'$ in which some color appears once implies an even cover in $\mathcal{H}$.
\end{enumerate}
Then, like before, applying Lemma \ref{prop:coloredcycle} implies the desired even cover in $\mathcal{H}$. As mentioned above, finding $G'$ is only possible because of the additional assumptions on $\mathcal{H}$. In Proposition \ref{prop:kikuchiclean} we state the crucial hypergraph property which implies the existence of such a $G'$.

In the second case, the flower Kikuchi graph $G$ is edge-colored, with each edge receiving a color $C$ which is an edge of $\mathcal{H}$. Like in the previous case, we can also then perform a (simpler) cleaning procedure to find a subgraph subgraph $G' \subseteq G$ with the same properties as those mentioned in the beginning of the section: $G'$ has many edges and is properly colored; a closed walk in $G'$ in which some color appears once implies an even cover in $\mathcal{H}$. Then, applying Lemma \ref{prop:coloredcycle} implies the desired even cover in $\mathcal{H}$.

\vspace{0.2cm}
\noindent \textbf{Locally decodable codes}
\vspace{0.2cm}

\noindent The problem of LDCs is very similar. As discussed in the introduction, we can reformulate this problem as finding a special even cover in an edge-colored hypergraph (see Theorem \ref{thm:LDC}). The only difference here is that the hyperedges of the hypergraph are properly colored and we want to find an even cover where some color appears exactly once. We again will first clean our hypergraph using Lemma \ref{lem:basiccleaning}. Then we will show that the given Kikuchi graph defined in Section \ref{sec:kikuchi} will satisfy the property in Proposition \ref{prop:kikuchiclean} and thus gives the desired even cover in $\mathcal{H}$.
\section{Cleaning tools, the Kikuchi graph and even covers}\label{sec:cleaning}
In this section we present two tools for cleaning hypergraphs. As we mentioned in the previous section, the goal is to find hypergraphs with \emph{nice} bucket decompositions and co-degree conditions. Here and later in the paper we will always assume that $k$ is odd.
\subsection{Cleaning I: A general tool for hypergraphs}
First, we need the following simple observation for general hypergraphs.
\begin{lem}\label{lem:basiccleaning}
Let $\mathcal{H}$ be a $k$-uniform hypergraph and $j \leq k$. Then, for all $t$ and $e \leq e(\mathcal{H})$, there is a $\mathcal{H}' \subseteq \mathcal{H}$ such that one of the following holds.
\begin{enumerate}
\item $e(\mathcal{H}') \geq e(\mathcal{H}) - e$ and every set $R$ of size $t$ has $\text{deg}_{\mathcal{H}'}(R) < m$.
    \item $\mathcal{H}'$ has a $(m, t)$-bucket decomposition and $e(\mathcal{H}') \geq e $.
    
\end{enumerate}
\end{lem}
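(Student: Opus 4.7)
The natural approach is a greedy bucket-extraction procedure: repeatedly pull out dense ``bundles'' of edges around heavy $t$-sets until we have either accumulated enough of them to form a decomposition, or exhausted the hypergraph of heavy $t$-sets. Concretely, the plan is to set $\mathcal{H}_0 := \mathcal{H}$ and build a sequence $\mathcal{H}_0 \supseteq \mathcal{H}_1 \supseteq \cdots$ as follows. At step $i \geq 1$, check whether $\mathcal{H}_{i-1}$ contains a set $X_i \subseteq V(\mathcal{H})$ of size $t$ with $\deg_{\mathcal{H}_{i-1}}(X_i) \geq m$. If so, pick any collection $\mathcal{E}_i$ of exactly $m$ edges of $\mathcal{H}_{i-1}$ all containing $X_i$, and set $\mathcal{H}_i := \mathcal{H}_{i-1} \setminus \mathcal{E}_i$. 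Stop the process at the first step $i^\ast$ at which one of the following occurs: (a) $\sum_{j \leq i^\ast} |\mathcal{E}_j| \geq e$, or (b) no $t$-set of $\mathcal{H}_{i^\ast - 1}$ has degree $\geq m$.

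Since each step removes $m$ edges, the process terminates. If case (a) triggers first, then I take $\mathcal{H}' := \bigcup_{j \leq i^\ast} \mathcal{E}_j$. By construction the pairs $(\mathcal{E}_j, X_j)$ form a partition of $E(\mathcal{H}')$ into buckets of size exactly $m$ each anchored at a $t$-set, hence an $(m,t)$-bucket decomposition, and $e(\mathcal{H}') \geq e$, yielding conclusion (2). If instead case (b) triggers first, take $\mathcal{H}' := \mathcal{H}_{i^\ast-1}$. By the stopping rule, every $t$-set $R$ satisfies $\deg_{\mathcal{H}'}(R) < m$. Also, since case (a) did not trigger at the previous step, the number of removed edges is strictly less than $e$, so $e(\mathcal{H}') > e(\mathcal{H}) - e$, giving conclusion (1).

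There is no real obstacle here — the entire content is in deciding what to do with the ``heavy'' $t$-sets. The only subtlety is making sure the two stopping conditions are mutually exclusive and exhaustive, which is handled by checking (a) strictly before (b) at each step. The unused parameter $j \leq k$ in the statement appears to play no role and can safely be ignored.
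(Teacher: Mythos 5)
Your proposal is correct and is essentially the paper's own argument: the same greedy extraction of $m$ edges through a heavy $t$-set, with the only cosmetic difference that you stop as soon as the removed edges reach $e$, whereas the paper runs the process until no heavy $t$-set remains and then decides which conclusion holds by comparing $e(\mathcal{H}_0)$ with $e(\mathcal{H})-e$. Both versions yield the dichotomy correctly (and you are right that the parameter $j$ in the statement plays no role in the proof).
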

\begin{proof}
Starting with $\mathcal{H}_0 := \mathcal{H}$, perform the following procedure. While $\mathcal{H}_0$ has a set $R$ of size $t$ with $\text{deg}_{\mathcal{H}_0}(R) \geq m$, then select $m$ such hyperedges in $\mathcal{H}_0$ which contain $R$ and remove them from $\mathcal{H}_0$. When this process stops, if $e(\mathcal{H}_0) \geq e(\mathcal{H}) - e$, then by taking $\mathcal{H}'=\mathcal{H}_0$ we have the first case.
Otherwise by taking $\mathcal{H'}=\mathcal{H}-\mathcal{H}_0$ we have the second case.
\end{proof}
\subsection{Cleaning II: Hypergraphs with high co-degrees have small even covers}
Next, we will prove a particular statement concerning Feige's problem. This will later allow us to reduce the general problem to considering a hypergraph with some co-degree assumptions. 
\begin{lem}
Let $\mathcal{H}$ be an $n$-vertex $k$-uniform hypergraph which has a $\left(2, \frac{k+1}{2} \right)$-bucket decomposition. Then, for sufficiently large cosntant $C$, if $e(\mathcal{H}) \geq C\cdot n \left(n/l \right)^{\frac{k-3}{2}} \cdot \log n$, it contains an even cover of size $O(l \log n)$. 
\end{lem}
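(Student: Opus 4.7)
The plan is to leverage the fact that Theorem~\ref{thm:evencover}(i) is already established for even uniformity and to reduce the present odd-uniform statement to it. The crucial observation is that each bucket in a $(2, \tfrac{k+1}{2})$-bucket decomposition consists of a pair of edges $\{E, E'\}$ with $|E \cap E'| = \tfrac{k+1}{2}$, so the symmetric difference $E \oplus E'$ has size exactly $2\bigl(k - \tfrac{k+1}{2}\bigr) = k-1$, which is \emph{even} because $k$ is odd.

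Concretely, I would define an auxiliary $(k-1)$-uniform hypergraph $\mathcal{H}'$ on the same vertex set $V(\mathcal{H})$ by taking one edge $E \oplus E'$ for each bucket $(\{E,E'\}, X)$. Since the buckets partition $E(\mathcal{H})$, one has $e(\mathcal{H}') = e(\mathcal{H})/2 \geq \tfrac{C}{2} \cdot n (n/l)^{(k-3)/2} \log n$. Before applying the even case I would dispose of a degenerate scenario: if two different buckets produce the same symmetric difference $E_1 \oplus E_1' = E_2 \oplus E_2'$, then $\{E_1, E_1', E_2, E_2'\}$ is already an even cover of size $4$ in $\mathcal{H}$, which is $O(l \log n)$, and we are done.

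Otherwise $\mathcal{H}'$ is a simple $(k-1)$-uniform hypergraph, and for $C$ large enough, Theorem~\ref{thm:evencover}(i) applied with even uniformity $k' = k-1$ (the required edge count becomes $C' n (n/l)^{k'/2 - 1} \log n = C' n (n/l)^{(k-3)/2} \log n$) produces an even cover $\{E_1 \oplus E_1', \ldots, E_t \oplus E_t'\}$ of size $t = O(l \log n)$ in $\mathcal{H}'$. Pulling this back to $\mathcal{H}$ completes the argument: because the buckets are pairwise edge-disjoint and each contributes two distinct edges, the collection $\{E_1, E_1', \ldots, E_t, E_t'\}$ consists of $2t$ distinct edges of $\mathcal{H}$, and $\bigoplus_{i=1}^t (E_i \oplus E_i') = \emptyset$ shows that it forms an even cover, of total size $2t = O(l \log n)$.

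The whole argument is essentially a one-step reduction; no individual step looks like a substantial obstacle. The main content is the structural observation that a $(2, \tfrac{k+1}{2})$-bucket decomposition is precisely the gadget that lets one \emph{fuse} pairs of odd-uniform edges into single even-uniform edges while preserving the even-cover structure, and this fusion is exactly what is needed to reduce the odd-$k$ regime to the already-handled even-$k$ regime of Theorem~\ref{thm:evencover}(i).
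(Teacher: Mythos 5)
Your overall strategy is the same as the paper's: fuse each bucket pair into a single edge of even size via the symmetric difference and invoke Theorem~\ref{thm:evencover}(i). However, there is a genuine gap in the very first structural claim. A $\left(2,\frac{k+1}{2}\right)$-bucket decomposition only guarantees that the two edges $E,E'$ of a bucket \emph{contain} a common set $X$ of size $\frac{k+1}{2}$; it does not guarantee $|E\cap E'|=\frac{k+1}{2}$. The intersection can be any $j$ with $\frac{k+1}{2}\le j\le k-1$, so $|E\oplus E'|=2(k-j)$ varies from bucket to bucket, and your auxiliary hypergraph $\mathcal{H}'$ is in general not $(k-1)$-uniform (not uniform at all). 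Theorem~\ref{thm:evencover}(i) is a statement about $k'$-uniform hypergraphs, so it cannot be applied to $\mathcal{H}'$ as you have defined it, and the edge-count bookkeeping you do with $k'=k-1$ is only valid in the special case where every bucket intersects in exactly $\frac{k+1}{2}$ vertices.

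The repair is exactly the extra step the paper takes: by pigeonhole there is a single value $j\ge\frac{k+1}{2}$ such that the buckets with $|Y_i\cap Z_i|=j$ carry at least a $\frac{2}{k}$-fraction of the edges of $\mathcal{H}$; restrict to those buckets and form the $2(k-j)$-uniform (multi)hypergraph of symmetric differences. Since $(k-j)-1\le\frac{k-3}{2}$, the hypothesis $e(\mathcal{H})\ge C\,n(n/l)^{\frac{k-3}{2}}\log n$ still supplies (for $C$ large, after losing the factor $k$) at least $C'\,n(n/l)^{(k-j)-1}\log n$ edges, which is the threshold Theorem~\ref{thm:evencover}(i) needs at uniformity $2(k-j)$; pulling the even cover back to $\mathcal{H}$ then works as you describe. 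Your handling of repeated symmetric differences (a size-$4$ even cover) is a fine substitute for the paper's use of a multihypergraph, and the pull-back argument at the end is correct; only the missing pigeonhole over intersection sizes needs to be added.
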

\begin{proof}
Let us denote the $\left(2, \frac{k+1}{2} \right)$-bucket decomposition by $(\mathcal{E}_1, X_1), (\mathcal{E}_2, X_2), \ldots$ and denote $\mathcal{E}_i$ as $\{Y_i,Z_i\}$. By the pigeonhole principle we can find some $j \geq \frac{k+1}{2}$ such that considering only the buckets with $|Y_i \cap Z_i| = j$, will reduce the size of the hypergraph by at most a $2/k$ factor. Now, we construct a new $2(k-j)$-uniform multihypergraph $\mathcal{G}$ on the same vertex set as $\mathcal{H}$ consisting of the hyperedges $Y_i \oplus Z_i$. The key observation which is easy to check from this definition is the following.
\begin{obs}
An even cover in $\mathcal{G}$ implies an even cover of twice the same size in $\mathcal{H}$.
\end{obs}
\noindent The above implies that we need only to show that $\mathcal{G}$ contains an even cover of size $O(l \log n)$. For this, note that $2(k-j)$ is even and $e(\mathcal{G}) \geq e(\mathcal{H})/k \geq (C/k) \cdot n \left(n/l \right)^{\frac{k-3}{2}} \cdot \log n \geq(C/k) \cdot n \left(n/l \right)^{(k-j)-1} \cdot \log n$. Thus we can directly imply Theorem \ref{thm:evencover} (i).
\end{proof}

\noindent The above discussion together with the proof of 
 Lemma \ref{lem:basiccleaning} (applied with $j = \frac{k+1}{2}$) immediately imply the following corollary.
\begin{cor}\label{cor}
For sufficiently large constant $C$, every $n$-vertex $k$-uniform hypergraph $\mathcal{H}$ without an even cover of size $O(l \log n)$ has a subhypergraph $\mathcal{H}' \subseteq \mathcal{H}$ such that the following hold.
\begin{itemize}
    \item $e(\mathcal{H}') \geq e(\mathcal{H}) - C \cdot n \left(n/l \right)^{\frac{k-3}{2}} \cdot \log n$.
    \item Every set $S$ of size at least $\frac{k+1}{2}$ has $\text{deg}_{\mathcal{H}'}(S) \leq 1$.
\end{itemize}
\end{cor}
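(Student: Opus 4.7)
The corollary is essentially a packaging of the two preceding results, so the plan is to apply the general cleaning Lemma~\ref{lem:basiccleaning} with parameters tuned so that the ``bad'' case forces the existence of a short even cover via the immediately preceding lemma, leaving only the ``good'' case, which is what the corollary asserts. Specifically, I would apply Lemma~\ref{lem:basiccleaning} with $t = \frac{k+1}{2}$, $m = 2$, and $e := C' \cdot n (n/l)^{(k-3)/2} \log n$, where $C'$ is the constant from the preceding lemma about $(2,(k+1)/2)$-bucket decompositions.

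The lemma then produces a subhypergraph $\mathcal{H}' \subseteq \mathcal{H}$ satisfying one of two alternatives. In the first alternative, $e(\mathcal{H}') \geq e(\mathcal{H}) - e$ and every set $R$ of size $\frac{k+1}{2}$ has $\deg_{\mathcal{H}'}(R) < 2$, i.e.\ $\deg_{\mathcal{H}'}(R) \leq 1$. This is exactly the conclusion of the corollary, after the easy observation that for any set $S$ with $|S| \geq \frac{k+1}{2}$ we can pick a subset $R \subseteq S$ of size exactly $\frac{k+1}{2}$, and by monotonicity of codegree under containment $\deg_{\mathcal{H}'}(S) \leq \deg_{\mathcal{H}'}(R) \leq 1$.

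In the second alternative, $\mathcal{H}'$ admits a $(2, \frac{k+1}{2})$-bucket decomposition with at least $e = C' \cdot n(n/l)^{(k-3)/2} \log n$ edges. Choosing $C'$ to be exactly the constant $C$ promised by the preceding lemma, we may apply that lemma directly to $\mathcal{H}'$ to obtain an even cover in $\mathcal{H}'$ of size $O(l \log n)$. Since $\mathcal{H}' \subseteq \mathcal{H}$, this is also an even cover of size $O(l \log n)$ in $\mathcal{H}$, contradicting the hypothesis that $\mathcal{H}$ has no such even cover. Hence the second alternative cannot occur, and the first alternative must hold.

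There is really no main obstacle here since the proof is a two-line combination: the cleaning lemma gives a dichotomy, and the bucket-decomposition lemma rules out one branch. The only points that need a little care are (a) matching the constants so that the ``large leftover'' branch of Lemma~\ref{lem:basiccleaning} produces at least as many edges as the preceding lemma needs in order to guarantee a short even cover, and (b) upgrading the codegree bound from sets of size exactly $\frac{k+1}{2}$ to all sets of size at least $\frac{k+1}{2}$, which as noted above is immediate from monotonicity.
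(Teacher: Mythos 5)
Your proposal is correct and matches the paper's argument: the paper likewise obtains the corollary by running the cleaning procedure of Lemma~\ref{lem:basiccleaning} with $t=\frac{k+1}{2}$ and $m=2$, ruling out the bucket-decomposition branch via the preceding lemma (which would yield an even cover of size $O(l\log n)$), and keeping the low-codegree branch. The constant-matching and the monotonicity remark for sets of size larger than $\frac{k+1}{2}$ are exactly the (implicit) details the paper glosses over, so nothing is missing.
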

\subsection{Cleaning III: A general tool for hypergraphs}
In this section we present a tool for finding a subhypergraph with both a nice bucket decomposition and co-degree assumptions. Given the lemmas in the previous section, we will need only to apply it to $k$-uniform hypergraphs with odd $k$ in which no two hyperedges share more than $k/2$ vertices.
\begin{lem}\label{prop:cleaning1}
Let $\mathcal{H}$ be an $n$-vertex $k$-uniform hypergraph with $nd$ hyperedges such that no two hyperedges share more than $k/2$ vertices.
Consider any function $m : [\floor{k/2}] \rightarrow \N^+$ with $m(1) \leq d/4k$. Then, there exists a subhypergraph $\mathcal{H}_0 \subseteq \mathcal{H}$ with at least $nd/k$ hyperedges and an index $1\leq t \leq k/2$ such that the following hold.
\begin{enumerate}
    \item[1)] For all $S \subseteq V(\mathcal{H}_0)$ with $t < |S| \leq k/2$, we have that $\text{deg}_{\mathcal{H}_0}(S) < m(|S|)$.
    \item[2)] If $t = 1$, then $\mathcal{H}_0$ has minimum degree at least $m(1)$. If $t \geq 2$, then $\mathcal{H}_0$ has a $\left(m(t),t \right)$-bucket decomposition. 
\end{enumerate}
\end{lem}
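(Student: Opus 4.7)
The plan is to apply Lemma~\ref{lem:basiccleaning} iteratively, level by level, starting from $j = k$ and working down to $j = 2$. We maintain a decreasing chain of subhypergraphs $\mathcal{H} = \mathcal{H}_k \supseteq \mathcal{H}_{k-1} \supseteq \cdots$ with the invariant that $\mathcal{H}_j$ satisfies $\text{deg}_{\mathcal{H}_j}(S) < m(|S|)$ for every $|S| > j$. At step $j$ we invoke Lemma~\ref{lem:basiccleaning} on $\mathcal{H}_j$ with $t = j$, $m = m(j)$, and budget $e = nd/(2k)$. If case 1 triggers, we let $\mathcal{H}_{j-1}$ be the returned subhypergraph: it loses at most $nd/(2k)$ edges and gains the codegree bound at the newly-processed level $j$, while higher-level codegree bounds are preserved automatically under edge removal. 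If case 2 ever triggers at some level $j \leq \lfloor k/2 \rfloor$, we stop and output $\mathcal{H}_0 := \mathcal{H}'$ with $t := j$: condition (1) is inherited from $\mathcal{H}_j$ since $\mathcal{H}_0 \subseteq \mathcal{H}_j$, condition (2) is exactly the $(m(j), j)$-bucket decomposition guaranteed by case 2, and the $\geq nd/(2k)$ edge bound is the budget guarantee.

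If case 1 holds all the way down to $j = 2$, then $\mathcal{H}_1$ has at least $nd - (k-1)\cdot nd/(2k) \geq nd/2$ edges and satisfies $\deg_{\mathcal{H}_1}(S) < m(|S|)$ for every $|S| \geq 2$. Applying Lemma~\ref{lem:mindegree} produces an induced subhypergraph $\mathcal{H}_0$ with at least $nd/4 \geq nd/(2k)$ edges and minimum degree at least $d/4 \geq d/(10k) = m(1)$; this is the $t = 1$ output. The higher-level codegree bounds of condition (1) are preserved in $\mathcal{H}_0$ since passing to an induced subhypergraph only decreases codegrees.

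\textbf{Main obstacle.} The delicate point is what to do if Lemma~\ref{lem:basiccleaning} returns case 2 at a level $j > \lfloor k/2 \rfloor$, since then $t = j$ would violate the requirement $t \leq k/2$, and shrinking the bucket decomposition to a lower level preserves codegree $m(j)$ on a size-$j > \lfloor k/2 \rfloor$ set, breaking condition (1). This is sidestepped by the context emphasized in the preamble to this section: Lemma~\ref{prop:cleaning1} is only applied to $k$-uniform hypergraphs in which no two edges share more than $k/2$ vertices (such as those produced by Corollary~\ref{cor}). For such hypergraphs $\deg(S) \leq 1$ whenever $|S| > k/2$, so provided $m(j) > 1$ at these high levels, the greedy process inside Lemma~\ref{lem:basiccleaning} cannot strip any edge at any level $j > \lfloor k/2 \rfloor$, and case 1 holds there vacuously with zero loss. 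The iteration therefore effectively begins at $j = \lfloor k/2 \rfloor$, and the output always satisfies $1 \leq t \leq k/2$. Everything else reduces to routine budget arithmetic with the per-level budget $e = nd/(2k)$: case 2 outputs meet the $nd/(2k)$ edge bound, and cumulative case 1 losses across all levels total at most $nd/2$, leaving enough edges to apply Lemma~\ref{lem:mindegree} at the end.
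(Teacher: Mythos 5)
Your proof is correct and takes essentially the same approach as the paper: the paper runs a single greedy process that repeatedly strips $m(r)$-edge buckets at the maximal violating level $r\ge 2$ into bins $\mathcal{H}_r$, then either pigeonholes among the bins (yielding the $t\ge 2$ case with its $(m(t),t)$-bucket decomposition) or applies Lemma~\ref{lem:mindegree} to the clean remainder (yielding $t=1$ with minimum degree $d/4\ge m(1)$), which is exactly your top-down level-by-level stripping with per-level budget $nd/2k$, repackaged through Lemma~\ref{lem:basiccleaning}. Your explicit handling of the $t\le k/2$ requirement via the standing assumption that no two edges share more than $k/2$ vertices (so levels above $k/2$ never trigger) also matches the paper, which imposes this restriction only in the surrounding discussion rather than inside the lemma's proof.
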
  
\begin{proof}
Set $\mathcal{H}_2, \ldots, \mathcal{H}_{\floor{k/2}} := \emptyset$ and $\mathcal{H}' := \mathcal{H}$ and repeat the following procedure as long as possible.
Let $r \in [2, \floor{k/2}]$ be maximal such that there exists some $S \subseteq V(\mathcal{H}')$ with $|S| = r$ with $\text{deg}_{\mathcal{H}'}(S) \geq m(r)$. Take 
exactly $m(r)$ hyperedges in $\mathcal{H}'$ that contain $S$, remove them from $\mathcal{H}'$ and add them to $\mathcal{H}_{r}$. The process stops when such an $r$ does not exist.

Now, suppose first that this process goes until $e(\mathcal{H}') < e(\mathcal{H})/2$. Then, we can stop the process at the first step in which that occurs. Note then that by pigenholing, one of the hypergraphs $\mathcal{H}_r$ will have at least $nd/k$ hyperedges. We then define $\mathcal{H}_0 := \mathcal{H}_r$ and observe that it satisfies the desired conditions.

Otherwise, the process stops while $e(\mathcal{H}') \geq e(\mathcal{H})/2 = nd/2$. Note that since the process stopped, it must be that such an $r$ no longer exists and thus, $\text{deg}_{\mathcal{H}'}(S) < m(|S|)$ for all sets $S$ with $|S| \geq 2$. Now, by Lemma \ref{lem:mindegree}, $\mathcal{H}'$ has a subgraph, which we take to be $\mathcal{H}_0$, with at least $nd/4$ hyperedges and minimum degree at least $d/4 \geq m(1)$. Clearly, the conditions now are satisfied for $t = 1$.
\end{proof}

\subsection{The Kikuchi graph and even covers}\label{sec:kikuchi}
We will now introduce the variant of the \emph{Kikuchi graph} for an edge-colored hypergraph with a given $(m,t)$-bucket decomposition which we already mentioned in Section 3.2. 
\begin{defn}\label{def:kikuchi}
Let $\mathcal{H}$ be an $n$-vertex edge-colored $k$-uniform hypergraph with an $(m,t)$-bucket decomposition with buckets $\{(\mathcal{E}_i, X_i)\}_{1 \leq i \leq p}$. 
Given an $l$, we define the \emph{$l$-Kikuchi graph} of $\mathcal{H}$ and this bucket decomposition to be an edge-colored graph $G = (V,E)$ as follows. Let $[n] \times [2]$ denote two disjoint copies of $[n]$, whose vertices are labeled $1$ and $2$ in order to distinguish between these sets. The vertex set $V(G)$ consists of all subsets of $[n] \times [2]$ of size $l$. Each such set $S \in V$ is viewed as $(S^{(1)},S^{(2)})$, where $S^{(1)}, S^{(2)} \subseteq [n]$ have labels $1$ and $2$ respectively. For each $i\in [p]$ and each ordered pair $(C,C')$ of hyperedges $C,C' \in \mathcal{E}_i$, let $\tilde{C}^{(1)}$ be $\tilde{C}:= C \setminus X_i$ labeled $1$ and $\tilde{C}'^{(2)}$ be $\tilde{C'} := C' \setminus X_i$ labeled $2$. We add an edge between $S, T\in V$, denoted $S \xleftrightarrow{(C,C')} T$, if $S \oplus T = \tilde{C}^{(1)} \oplus \tilde{C}'^{(2)}$ and one of the following holds.
    \begin{itemize}
        \item $|\tilde{C}^{(1)} \cap S^{(1)}| = |\tilde{C}'^{(2)} \cap T^{(2)}| = \ceil{\frac{k-t}{2}}$ and $|\tilde{C}'^{(2)} \cap S^{(2)}| = |\tilde{C}^{(1)} \cap T^{(1)}| = \floor{\frac{k-t}{2}}$.

        \item $|\tilde{C}^{(1)} \cap S^{(1)}| = |\tilde{C}'^{(2)} \cap T^{(2)}| = \floor{\frac{k-t}{2}}$ and $|\tilde{C}'^{(2)} \cap S^{(2)}| = |\tilde{C}^{(1)} \cap T^{(1)}| = \ceil{\frac{k-t}{2}}$.
    \end{itemize}
\noindent Further, the edge $S \leftrightarrow T$ is said to be \emph{associated} with the hyperedges $C,C'$ in $\mathcal{H}$ and is colored by the pair $(c,c')$ where $c$ is the color of $C$ in $\mathcal{H}$ and $c'$ is the color of $C'$. 
\end{defn} 

\noindent
Note that the Kikuchi graph defined here is more complicated than the one introduced in Section \ref{sec:evenk}. We split the vertex set $V$ into two copies of $[n]$. This is needed because the pairs $C ,C'$ in our definition of buckets may have intersection larger than $t$, meaning that $|C \oplus C'| = |\tilde{C} \oplus \tilde{C}'| < 2(k - t)$. Taking $\tilde{C}, \tilde{C}'$ to be subsets of two different copies of $[n]$ automatically makes $\tilde{C}^{(1)}, \tilde{C}'^{(2)}$ disjoint, so that $|S \oplus T| = |\tilde{C}^{(1)} \oplus \tilde{C}'^{(2)}| = 2(k - t)$. Note also that we allow a vertex of the Kikuchi graph $S \subseteq [n] \times [2]$ to contain two copies of some element in $[n]$ with different labels.

The conditions in Definition \ref{def:kikuchi} are such that $S^{(1)}$ and $T^{(1)}$ (the vertices in $S$, $T$ labeled $1$) have symmetric difference $\tilde{C}^{(1)}$, while $S^{(2)}$ and $T^{(2)}$ (the vertices in $S$, $T$ labeled $2$) have symmetric difference $\tilde{C}'^{(2)}$.
Moreover, we would like $\tilde{C}^{(1)}$ and $\tilde{C}'^{(2)}$ to be split evenly, but if $k-t$ is odd, then we can only split them into $\ceil{\frac{k-t}{2}}$ and $\floor{\frac{k-t}{2}}$ sized sets, and we must ensure that $S$ and $T$ have the same sizes.

Given the above Kikuchi graph, we also define  associated hypergraphs $\mathcal{H}_c$ which will be crucial for our proof. In Proposition \ref{prop:kikuchiclean} we will show that if these hypergraphs have hyperedges that are rather uniformly distributed, that is, that there is no set of vertices with too large co-degree, then $\mathcal{H}$ must contain an even cover of small size. 

\begin{defn}\label{def:HCdef}
For each color $c$ of $\mathcal{H}$, we define the $(k-t)$-uniform multihypergraph $\mathcal{H}_c$ with vertex set $[n] \times [2]$ as follows: for each $c$-colored hyperedge $C$ in $\mathcal{H}$ belonging to some bucket $\mathcal{E}_i$ and a hyperedge $C' \neq C$ also in $\mathcal{E}_i$, we put in $\mathcal{H}_c$ all hyperedges $F$ such that one of the following holds.
 \begin{itemize}
        \item $\{|\tilde{C}^{(1)} \cap F^{(1)}|, |\tilde{C}'^{(2)} \cap F^{(2)}|\} = \{\ceil{\frac{k-t}{2}}, \floor{\frac{k-t}{2}}\}$.
        \item $\{|\tilde{C}^{(2)} \cap F^{(2)}|, |\tilde{C}'^{(1)} \cap F^{(1)}|\} = \{\ceil{\frac{k-t}{2}}, \floor{\frac{k-t}{2}}\}$.
\end{itemize}
\end{defn}

\noindent
Note that, if $S \xleftrightarrow{(C,C')} T$ is an edge of the Kikuchi graph $G$ and $C, C'$ have colors $c,c'$ respectively, then both $S$ and $T$ contain in addition to common part $S \cap T$ a hyperedge from
$\mathcal{H}_c \cap \mathcal{H}_{c'}$.

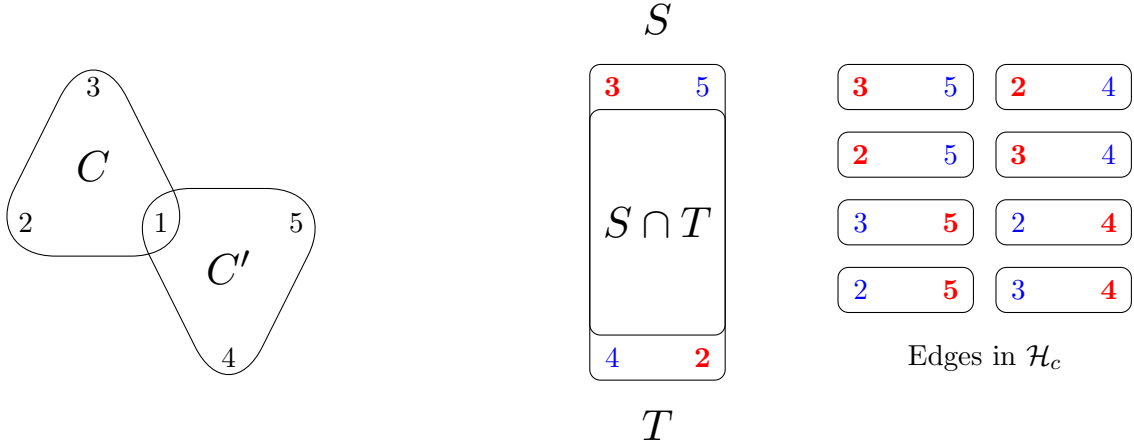
\begin{figure}[ht]
\begin{tikzpicture}[scale = 3]

    \node[scale=1, color=black] at (-2.6,0.5){$2$};
    \node[scale=1, color=black] at (-2,0.5){$1$};
    \node[scale=1, color=black] at (-2.3,1.1){$3$};
    \node[scale=1, color=black] at (-1.4,0.5){$5$};
    \node[scale=1, color=black] at (-1.7,-0.1){$4$};
    \draw [rounded corners=10mm] (-1.7,-0.35)--(-2.2,0.65)--(-1.2,0.65) -- cycle;
    \node[scale=1.5, color=black] at (-1.7,0.3){$C'$};
    \draw [rounded corners=10mm] (-1.8,0.35)--(-2.3,1.35)--(-2.8,0.35) -- cycle;
    \node[scale=1.5, color=black] at (-2.3,0.75){$C$};

    \node[scale=1, color=red] at (0.4,-0.1){$\mathbf{2}$};
    \node[scale=1, color=red] at (0,1.1){$\mathbf{3}$};
    \node[scale=1, color=blue] at (0.4,1.1){$5$};
    \node[scale=1, color=blue] at (0,-0.1){$4$};

    \draw[rounded corners] (-0.1,1.2) rectangle (0.5,-0);
    \draw[rounded corners] (-0.1,1) rectangle (0.5,-0.2);
    \node[scale=1.5, color=black] at (0.2,0.5){$S \cap T$};
    \node[scale=1.5, color=black] at (0.2,1.4){$S$};
    \node[scale=1.5, color=black] at (0.2,-0.4){$T$};

\draw[rounded corners] (1,1.2) rectangle (1.6,1);
\node[scale=1, color=red] at (1.1,1.1){$\mathbf{3}$};
\node[scale=1, color=blue] at (1.5,1.1){$5$};

\draw[rounded corners] (1.7,1.2) rectangle (2.3,1);
\node[scale=1, color=red] at (1.8,1.1){$\mathbf{2}$};
\node[scale=1, color=blue] at (2.2,1.1){$4$}; 

\draw[rounded corners] (1,0.9) rectangle (1.6,0.7);
\node[scale=1, color=red] at (1.1,0.8){$\mathbf{2}$};
\node[scale=1, color=blue] at (1.5,0.8){$5$};

\draw[rounded corners] (1.7,0.9) rectangle (2.3,0.7);
\node[scale=1, color=red] at (1.8,0.8){$\mathbf{3}$};
\node[scale=1, color=blue] at (2.2,0.8){$4$};

\draw[rounded corners] (1,0.6) rectangle (1.6,0.4);
\node[scale=1, color=blue] at (1.1,0.5){$3$};
\node[scale=1, color=red] at (1.5,0.5){$\mathbf{5}$};

\draw[rounded corners] (1.7,0.6) rectangle (2.3,0.4);
\node[scale=1, color=blue] at (1.8,0.5){$2$};
\node[scale=1, color=red] at (2.2,0.5){$\mathbf{4}$}; 

\draw[rounded corners] (1,0.3) rectangle (1.6,0.1);
\node[scale=1, color=blue] at (1.1,0.2){$2$};
\node[scale=1, color=red] at (1.5,0.2){$\mathbf{5}$};

\draw[rounded corners] (1.7,0.3) rectangle (2.3,0.1);
\node[scale=1, color=blue] at (1.8,0.2){$3$};
\node[scale=1, color=red] at (2.2,0.2){$\mathbf{4}$}; 

\node[scale=1] at (1.65,-0.1){Edges in $\mathcal{H}_c$};

\end{tikzpicture}
\caption{
    An illustration of Definitions \ref{def:kikuchi} and \ref{def:HCdef}. On the left, we have a pair of hyperedges $C,C'$ which both belong to a bucket. Then, on its right is depicted an edge $S \xleftrightarrow{(C,C')} T$ created by this pair. Then, on the further right are illustrated all hyperedges of $\mathcal{H}_c$ (where $c$ is the color of the hyperedge $C$) which are created in Definition \ref{def:HCdef} by the pair $(C,C')$. Here, red and blue correspond to labels $1$ and $2$ respectively (recall that $S,T \subseteq [n] \times [2]$ and $\mathcal{H}_c$ has vertex set $[n]\times [2]$).}
    
\end{figure}

\noindent Let us now note two important properties of the $l$-Kikuchi graph $G$.
\begin{obs}\label{obs}
If $G$ contains a closed walk
whose edges are overall associated to a multi-set $\mathcal{H}'$ of hyperedges in $\mathcal{H}$, then it is such that $\bigoplus_{C \in \mathcal{H}'} C = \emptyset$. In particular, if there is some color which appears an odd number of times in $\mathcal{H}'$, then $\mathcal{H}$ contains an even cover of size at most $e(\mathcal{H}')$ in which some color appears an odd number of times. Further, $G$ has average degree at least $$e(\mathcal{H}) (m-1) \cdot \frac{\binom{k-t}{\lfloor \frac{k-t}{2} \rfloor}^2 \cdot \binom{2n-2k+2t}{l-k+t}}{\binom{2n}{l}} \geq   e(\mathcal{H}) (m-1) \cdot \left(\frac{l}{2n} \right)^{k-t}. $$
\end{obs}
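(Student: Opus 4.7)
The plan is to prove the observation in three parts, matching its structure.

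First, to establish that a closed walk produces the identity $\bigoplus_{D \in \mathcal{H}'} D = \emptyset$, I would fix a closed walk $S_1, S_2, \ldots, S_r, S_1$ in $G$ where the $i$-th edge is associated with an ordered pair $(C_i, C_i')$ drawn from a bucket with common set $X_{j_i}$. By the definition of the Kikuchi edges, $S_i \oplus S_{i+1} = \tilde{C}_i^{(1)} \oplus \tilde{C}_i'^{(2)}$ in $[n]\times [2]$. Summing over $i$, the left-hand side telescopes to $\emptyset$ since each $S_i$ appears twice, hence $\sum_i \tilde{C}_i^{(1)} \oplus \sum_i \tilde{C}_i'^{(2)} = \emptyset$. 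Separating the red and blue copies gives $\sum_i \tilde{C}_i = \emptyset = \sum_i \tilde{C}_i'$ in $\mathbb{F}_2^{[n]}$. Since $\tilde{C}_i = C_i \setminus X_{j_i} = C_i \oplus X_{j_i}$ (and similarly for $C_i'$), adding the two relations cancels the $X_{j_i}$ terms, yielding $\sum_i C_i \oplus \sum_i C_i' = \emptyset$; this is exactly $\bigoplus_{D \in \mathcal{H}'} D = \emptyset$ where $\mathcal{H}' = \{C_i\} \cup \{C_i'\}$ taken with multiplicity.

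For the second claim about colors appearing oddly, I would reduce the multiset $\mathcal{H}'$ modulo $2$, i.e., discard every edge that appears an even number of times. The resulting \emph{set} $\mathcal{C} \subseteq \mathcal{H}$ still satisfies $\bigoplus_{D \in \mathcal{C}} D = \emptyset$ and therefore is an even cover of size at most $|\mathcal{H}'|$. Because $\mathcal{H}$ is properly edge-coloured, each colour corresponds to a unique edge; the hypothesis that some colour appears an odd number of times in $\mathcal{H}'$ means the unique edge of that colour appears oddly in $\mathcal{H}'$, and thus survives into $\mathcal{C}$, so that colour appears (exactly once, hence oddly) in the resulting even cover.

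For the average degree bound, I would count the edges of $G$ directly. For each bucket $(\mathcal{E}_i,X_i)$ and each ordered pair $(C,C') \in \mathcal{E}_i^2$ with $C \neq C'$, the number of edges $S \xleftrightarrow{(C,C')} T$ is obtained by freely choosing: the split of the $k-t$ elements of $\tilde{C}^{(1)}$ between $S$ and $T$ (in $\binom{k-t}{\lfloor (k-t)/2\rfloor}$ ways), the analogous split of $\tilde{C}'^{(2)}$ (in $\binom{k-t}{\lfloor(k-t)/2\rfloor}$ ways), and the common part $S \cap T \subseteq ([n]\times [2])\setminus (\tilde{C}^{(1)}\cup \tilde{C}'^{(2)})$, which has $\binom{2n-2(k-t)}{l-(k-t)}$ choices; the two bullet cases correspond to whether $S$ or $T$ receives the ceiling-sized part of each $\tilde{C}$. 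Summing over all ordered pairs in all buckets gives a total of $e(\mathcal{H})(m-1)\cdot \binom{k-t}{\lfloor(k-t)/2\rfloor}^2\binom{2n-2(k-t)}{l-(k-t)}$ edges, and dividing by $|V(G)|=\binom{2n}{l}$ produces the first displayed expression.

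Finally, to derive the simpler lower bound $(l/2n)^{k-t}$, I would expand the ratio $\binom{2n-2(k-t)}{l-(k-t)}/\binom{2n}{l}$ as a product of $2(k-t)$ factors using falling factorials and use the elementary bound $\binom{k-t}{\lfloor (k-t)/2\rfloor}^2 \geq 2^{k-t}$ to absorb the denominator, together with the assumption $l \leq n$ (guaranteed, say, by Lemma \ref{lem:linalg}) to obtain each of the $k-t$ factors at least $l/(2n)$. The main bookkeeping hurdle is being careful about the two symmetric bullet cases in Definition \ref{def:kikuchi}: they correspond to swapping the roles of $S$ and $T$ for fixed ordered $(C,C')$, so one must make sure neither to double-count edges nor to omit the factor-of-two contribution when passing between ordered and unordered conventions; once this is pinned down, the rest is a routine calculation.
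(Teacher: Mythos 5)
Your proposal follows essentially the same route as the paper: the paper only spells out the average-degree computation, and your count (ordered pairs $(C,C')$ within a bucket, a $\binom{k-t}{\lfloor (k-t)/2\rfloor}^2$ factor for the splits of $\tilde{C}^{(1)}$ and $\tilde{C}'^{(2)}$ between $S$ and $T$, then $\binom{2n-2k+2t}{l-k+t}$ choices of $S\cap T$, divided by $v(G)=\binom{2n}{l}$) is exactly the paper's. Your telescoping argument, with the $X_{j_i}$'s cancelling because both members of an associated pair come from the \emph{same} bucket, is the intended ``simple to check'' proof of the first assertion and is correct.

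The one step that would fail as written is your justification of the colour clause: ``each colour corresponds to a unique edge'' does not follow from the colouring being proper --- a proper edge-colouring only forbids two edges of the same colour sharing a vertex, and in the main application of this clause (Theorem \ref{thm:LDC}) every colour class has at least $\alpha n$ edges; moreover the observation is stated for an arbitrary edge-coloured $\mathcal{H}$. The conclusion is still easily salvaged by a parity argument: after reducing the multiset $\mathcal{H}'$ modulo $2$, the number of distinct edges of colour $c$ that survive into $\mathcal{C}$ is congruent mod $2$ to the total multiplicity of colour $c$ in $\mathcal{H}'$, so if the latter is odd then colour $c$ appears an odd number of times (not necessarily exactly once) in the even cover $\mathcal{C}$. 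Two smaller points on the last part: the bound $\binom{k-t}{\lfloor (k-t)/2\rfloor}^2\ge 2^{k-t}$ fails for $k-t=1$, and the ordered/unordered bookkeeping you flag does cost a factor $2$ when $k-t$ is even; both, however, are absorbed by the factor $2$ you discard anyway when lower-bounding the average degree $2e(G)/v(G)$ by $e(G)/v(G)$, so the stated inequality is unaffected.
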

\begin{proof}
We will briefly explain the average degree computation. The number of ordered pairs $(C,C')$ of hyperedges in the same bucket is $e(\mathcal{H}) (m-1)$. Now, with the pair $(C,C')$ fixed, choosing an edge $S \xleftrightarrow{(C,C')} T$ can be viewed as a two-step choice. First, we choose the intersections $\tilde{C_1} \cap S^{(1)}, \tilde{C_1} \cap T^{(1)}, \tilde{C_2} \cap S^{(2)}, \tilde{C_2} \cap T^{(2)}$ so that these intersections sizes are as required in Definition \ref{def:kikuchi} -- there are $\binom{k-t}{\lfloor \frac{k-t}{2} \rfloor}^2$ such choices. Once those intersections are chosen, the next step is to choose a set $S \cap T = S \setminus F = T \setminus F'$ of size $l-k+t$ out of $2n-2k+2t$ vertices. Combining all of these, we get that the number of edges in $G$ is $e(\mathcal{H}) (m-1) \cdot \binom{k-t}{\lfloor \frac{k-t}{2} \rfloor}^2 \cdot \binom{2n-2k+2t}{l-k+t}$. Since $G$ has $\binom{2n}{l}$ vertices, we have the desired average degree.
\end{proof}
\noindent As anticipated by the above observations, we can use Kikuchi graphs to find even covers in hypergraphs.
\begin{prop}\label{prop:kikuchiclean}
Let $k$ be odd and $\mathcal{H}$ be an edge-colored $n$-vertex $k$-uniform hypergraph with $nd$ hyperedges and an $(m,t)$-bucket decomposition. Let $G$ be the $l$-Kikuchi graph of $\mathcal{H}$ and suppose that the following holds.
\begin{itemize}
    \item For all colors $c$, hyperedges $F \in \mathcal{H}_c$ and $0 \leq j \leq k-t$, the number of hyperedges $F' \in \mathcal{H}_c$ with $|F' \cap F| = j$ is at most $ \left(d(G)/2000 \log v(G) \right)\cdot \left(n/l \right)^{k-t-j}$. 
\end{itemize}
Then, $\mathcal{H}$ contains an even cover of size at most $2\log v(G)$ 
in which some color appears an odd number of times.
\end{prop}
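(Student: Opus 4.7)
The plan is to apply Lemma~\ref{prop:colouredcycle} to the Kikuchi graph $G$ (or a carefully chosen subgraph $G' \subseteq G$), taking $s = 2$ and viewing each edge $S \xleftrightarrow{(C,C')} T$ as labelled by the unordered pair $\{\mathrm{colour}(C), \mathrm{colour}(C')\}$. Once the degree hypothesis of Lemma~\ref{prop:colouredcycle} is verified, the resulting closed walk of length at most $2 \log v(G')$ in which some colour appears exactly once gives, via Observation~\ref{obs}, an even cover of $\mathcal{H}$ of the prescribed size in which that colour appears an odd number of times.

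The first step is to rewrite the ``degree per colour'' quantity in the language of $\mathcal{H}_c$. A direct inspection of Definitions~\ref{def:kikuchi} and~\ref{def:HCdef} shows that, for any $S \in V(G)$ and any colour $c$, the edges of $G$ incident to $S$ whose colour pair contains $c$ are in multiplicity-preserving bijection with the edges $F \in \mathcal{H}_c$ satisfying $F \subseteq S$. Concretely, Kikuchi case~1 with $c$ in the first slot of the label corresponds to condition~1 of Definition~\ref{def:HCdef}, while the second-slot occurrences arise, via the $C \leftrightarrow C'$ swap, from condition~2 of Definition~\ref{def:HCdef}; the unique Kikuchi neighbour of $S$ is then determined by $T = S \oplus \tilde C^{(1)} \oplus \tilde C'^{(2)}$. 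Writing $N(S,c) := |\{F \in \mathcal{H}_c : F \subseteq S\}|$ (with multiplicity), the required degree bound becomes $N(S,c) \le d(G')/(40 \log v(G'))$ for every $S$ and $c$.

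The main obstacle is that the hypothesis only controls the codegree structure of $\mathcal{H}_c$ around each of its edges and does not give a pointwise bound on $N(S,c)$; directly summing the per-$j$ bounds only yields $N(S,c) \le 2D(n/l)^{k-t}$ with $D := d(G)/(2000 \log v(G))$, which is much too large. To get around this I would construct $G'$ by an edge-deletion cleaning: for every pair $(S,c)$ with $N(S,c)$ above the Lemma~\ref{prop:colouredcycle} threshold of order $D$, delete edges of $G$ at $S$ carrying colour $c$ until the count drops below. The budget for this cleaning is estimated using the hypothesis twice -- first to bound $|\mathcal{H}_c| \le O(D(n/l)^{k-t})$ by summing the per-$j$ bounds, and then, through a parallel second-moment computation, to bound $\sum_S N(S,c)^2$ via the per-intersection codegree bound applied to pairs $(F,F')$. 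The generous slack built into the factor $1/2000$ in the hypothesis is precisely what makes this cleaning delete only a small fraction of $e(G)$, so that $G'$ still has $d(G') \ge d(G)/2$ and $v(G') \le v(G)$, and the degree condition needed for Lemma~\ref{prop:colouredcycle} holds at every vertex of $G'$.

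Finally, applying Lemma~\ref{prop:colouredcycle} to $G'$ with $s = 2$ produces a closed walk of length at most $2 \log v(G') \le 2 \log v(G)$ in which some single colour appears exactly once. Invoking Observation~\ref{obs} on this walk converts it into an even cover of $\mathcal{H}$ of size at most $2 \log v(G)$ in which that colour (an edge of $\mathcal{H}$) appears an odd number of times, completing the proof.
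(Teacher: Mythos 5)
Your proposal is correct and is in essence the paper's own argument: both proofs produce $G'$ from $G$ by discarding Kikuchi edges at vertex--colour pairs that see too many edges of a given colour, bound the number of discarded edges via the hypothesised per-intersection codegree bound on $\mathcal{H}_c$, and then invoke Lemma~\ref{prop:colouredcycle} with $s=2$ followed by Observation~\ref{obs}. The paper organises the deletion bound as a conditional-expectation-plus-Markov argument for each fixed ordered pair $(C_1,C_2)$ in a bucket (a random Kikuchi edge of that colour pair is deleted with probability at most $1/2$, by a geometric-sum computation over the possible intersection sizes with a second $\mathcal{H}_{c_1}$-edge), whereas you propose to globally bound $\sum_S N(S,c)^2$; this is the same geometric-sum calculation under a different bookkeeping, and the slack in the $1/2000$ factor does make either version close. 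One small slip: the correspondence between Kikuchi edges at $S$ whose colour pair contains $c$ and $\mathcal{H}_c$-edges $F\subseteq S$ is an injection rather than a bijection (containment $F\subseteq S$ does not force $S$ to meet $\tilde{C}^{(1)}$ and $\tilde{C}'^{(2)}$ in exactly the sizes demanded by Definition~\ref{def:kikuchi}), but this only overcounts your deletion budget and so is harmless for the argument.
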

\begin{proof}
Our aim is to use Lemma \ref{prop:coloredcycle}. However, we might not be able to apply it directly to the Kikuchi graph $G$ (which by Observation \ref{obs} would imply an even cover in $\mathcal{H}$). This is because $G$ might not have a well-behaved coloring - it could well be that there are vertices and colors which do not satisfy the condition of Lemma \ref{prop:coloredcycle}. 

Instead, it turns out that we will be able to delete some edges from $G$ and get a subgraph $G' \subseteq G$ which does satisfy the necessary conditions. Let $d(G)$ denote the average degree of $G$ and for each ordered pair $(C,C')$ of hyperedges in $\mathcal{H}$ with colors $c,c'$, we delete all edges $S \xleftrightarrow{C,C'} T$ such that one of $S$ or $T$ is adjacent to more than $d(G)/80 \log v(G)$ edges with one of the colors $c,c'$. Let $G'$ be the resulting graph. Note that if we can show that $d(G') \geq d(G)/2$, then Lemma \ref{prop:coloredcycle} (with $s=2$) implies that $G$ contains a closed walk of size at most $2\log v(G)$ in which some color appears once. By Observation \ref{obs}, this implies an even cover in $\mathcal{H}$ of size at most $2\log v(G)$ as desired. 

The following claim clearly implies that $d(G') \geq d(G)/2$. Let $C_1,C_2$ be any two hyperedges of $\mathcal{H}$ (with colors $c_1,c_2$ respectively) in the same bucket $\mathcal{E}_i$. 
\begin{claim}
At least half of the edges with color-pair $(C_1,C_2)$ are in $G'$.
\end{claim}
\begin{proof}
To prove this claim it is enough to show that a uniformly chosen edge with color pair 
$(C_1,C_2)$ is deleted with probability at most $1/2$.
Note that with $C_1,C_2$ fixed, choosing a uniformly random edge $S \xleftrightarrow{(C_1,C_2)} T$ can be viewed as a two-step uniform choice. First, we choose at random the intersections $\tilde{C_1} \cap S^{(1)}, \tilde{C_1} \cap T^{(1)}, \tilde{C_2} \cap S^{(2)}, \tilde{C_2} \cap T^{(2)}$ so that these intersections sizes are as required in Definition \ref{def:kikuchi}. This will, for $S$ (and $T$), fix a hyperedge $F \in \mathcal{H}_{c_1} \cap \mathcal{H}_{c_2}$ which is contained in $S$ (and another hyperedge $F'$ in $\mathcal{H}_{c_1} \cap \mathcal{H}_{c_2}$ contained in $T$). Secondly, once those intersections are chosen, the next step is to choose a uniformly random set $S \cap T = S \setminus F = T \setminus F'$ of size $l-k+t$ out of $2n-2k+2t$ vertices. 

Now, we will upper bound the probability that such a random edge is not in $G'$. We will deal first with the case that $S$ is incident to 
more than $d(G)/80 \log v(G)$ other edges with the color $c_1$ (and thus, the edge $S \leftrightarrow T$ will not be in $G'$) and will show that a random edge has this property with probability at most $1/8$. The other four cases ($S$ incident to too many edges of color $c_2$, $T$ incident to too many edges of color $c_1$ and $T$ incident to too many edges of color $c_2$) are analogous and so, we will in the end have a total probability of deletion of at most $4/8 = 1/2$. Suppose then that $(C_3,C_4)$ are two other hyperedges both in the same bucket and one of them has color $c_1$ - without loss of generality, assume it is $C_3$. Suppose there exists an edge $S \xleftrightarrow{C_3,C_4} T'$ for some other $T'$. Then, $S$ must contain some additional hyperedge in $\mathcal{H}_{c_1}$ - let us denote this by $E$. For each such possible $E$, since $S \setminus F$ is chosen uniformly, the probability that $S$ contains it is at most $$\frac{\binom{2n-2k+2t}{l-|E \setminus F|-k+t}}{\binom{2n-2k+2t}{l-k+t}} \leq \left(\frac{l-k+t}{2n-2k+2t - (l-k+t)} \right)^{|E\setminus F|} \leq (0.51l/n)^{|E\setminus F|},$$ 
using the fact that in all our proofs we can assume that $l \leq n/\log n$.
Furthermore, by assumption there are at most $\left(d(G)/2000 \log v(G) \right) \cdot \left(n/l \right)^{k-t-j}$ hyperedges of $\mathcal{H}_{c_1}$ with $|E \cap F| = j$. Therefore, the expected number of such edges $S \xleftrightarrow{C_3,C_4} T'$ is at most  
$$ \left(d(G)/2000 \log v(G) \right) \cdot \sum_{1 \leq j \leq k-t} \left(n/l \right)^{k-t-j} \cdot (0.51l/n)^{k-t-j} \leq d(G)/800 \log v(G).$$
Concluding, by Markov's inequality, that the probability that a random edge is deleted because of one of its endpoints and one of its colors is at most $0.1<1/8$, as desired.
\end{proof}
\end{proof}

\section{3-LDC lower bound}

\noindent In this section, we will apply the previous tools to prove Theorem \ref{thm:LDC}.

\begin{proof}[ of Theorem \ref{thm:LDC}]
Take $K:= 10^{7}/ \alpha^2$, let $\mathcal{H}$ be such a hypergraph and let $l := n^{1/3}$. We can assume without loss of generality that we have $d := K l \log n$ colors, each with precisely $\alpha n$ hyperedges, so that our hypergraph has $\alpha nd$ hyperedges. We will want to apply Proposition \ref{prop:kikuchiclean} in order to find such a special even cover. For this, let us first note that one of the following holds.
\begin{enumerate}
    \item There is a $\mathcal{H}' \subseteq \mathcal{H}$ with $e(\mathcal{H}') \geq e(\mathcal{H})/2$ with a $(K\log n,2)$-bucket decomposition.
    \item There is a hypergraph $\mathcal{H}' \subseteq \mathcal{H}$ with $e(\mathcal{H}') \geq e(\mathcal{H})/4$ such that it has a $( \alpha d/4, 1)$-bucket decomposition and every set $R$ of size two has $\text{deg}_{\mathcal{H}'}(R) \leq K\log n$.
\end{enumerate}
\noindent 
Indeed, this follows directly by applying first Lemma \ref{lem:basiccleaning} with $t=2$ and $m = K\log n$. This gives that the first item holds or there is a $\mathcal{H}' \subseteq \mathcal{H}$ with $e(\mathcal{H}') \geq e(\mathcal{H})/2$ and such that every set $R$ of size two has $\text{deg}_{\mathcal{H}'}(R) \leq K\log n$. For this second case, we can then apply Lemma \ref{buckets} to find the $(\alpha d/4,1)$-bucket decomposition in the second item.

Now we will show that in both cases above, Proposition \ref{prop:kikuchiclean} holds for the $l$-Kikuchi graph of the corresponding hypergraph $\mathcal{H}'$, which for convenience we redenote as $\mathcal{H}$.

Let us first suppose that (1.) holds and let us consider the $l$-Kikuchi graph $G$ of $\mathcal{H}$ and the given bucket decomposition. Using that $K=10^{7}/\alpha^2$ and $v(G)={2n \choose l} \leq (2n)^l$,
by Observation \ref{obs} (with $k=3$ and $t=2$), the $l$-Kikuchi graph $G$ has average degree at least $(\alpha nd/2) \cdot (K\log n - 1) \cdot (l/2n) \geq 8000dl \log (2n) \geq 8000d \cdot \log v(G)$. Also note that for each color $c$, the multihypergraph $\mathcal{H}_c$ (from Definition \ref{def:HCdef}) is $1$-uniform, and thus consists of single vertices in $[n] \times [2]$. Therefore, by Proposition \ref{prop:kikuchiclean} we need only to show that given a color $c$, an $F \in \mathcal{H}_c$ and a $0 \leq j \leq 1$, the number of $F' \in \mathcal{H}_c$ with $|F' \cap F| = j$ is at most 
$4d \cdot \left(n/l \right)^{1-j} \leq d(G)/(2000 \log v(G)) \cdot \left(n/l \right)^{1-j}$.

For $j = 0$, the number of such $F'$ is at most four times the number of hyperedges in $\mathcal{H}_c$ which is $4(K\log n) n = 4d \cdot (n/l)$. To see this, recall that the hypergraph $\cal H$ is properly colored and therefore at each of its vertices there is only one bucket of size $K\log n$ containing a hyperedge $C$ of color $c$ - every other hyperedge $C'$ in this bucket together with $C$ gives then $4$ hyperedges of $\mathcal{H}_c$ - indeed there are four hyperedges of $\mathcal{H}_c$ satisfying one of the items in Definition \ref{def:HCdef}; these consist of the two labeled copies of the vertex in $C \setminus C'$ and of the vertex in $C' \setminus C$. For $j=1$, let the hyperedge $C \in E({\mathcal H})$ of color $c$ be the one in Definition \ref{def:HCdef} forming $F$. The hyperedge $F' \in \mathcal{H}_c$ with $F' = F$  can be obtained in two ways: first from some hyperedge $C'$ of $\cal H$ incident to the same vertex as singleton the (uncolored) $F$; this gives at most the maximum degree of $\cal H$, i.e., $d$ hyperedges; we can also obtain $F'$ from any hyperedge $C' \in {\mathcal H}$ from the same bucket as $C$. This gives at most $K\log n$ additional hyperedges. So in total $d + K \log n \leq 2d$.  
 
Now assume that (2.) holds and recall that $l=n^{1/3}$,  $d=Kl \log n$ and $v(G)\leq (2n)^l$ where $G$ is the $l$-Kikuchi graph of $\mathcal{H}$ and the given bucket decomposition.
In this case ${\mathcal H}_c$ is a multigraph for every color $c$.
Observation \ref{obs} (with $k=3$ and $t=1$) implies that the $l$-Kikuchi graph has average degree at least 
\begin{align*}
d(G) &\geq (\alpha nd/4) \cdot (\alpha d/4 - 1) \cdot (l/2n)^2\\
&\geq (\alpha K)^2/100 \cdot  (l^3/n) \cdot l \cdot (\log n)^{2}\\
&\geq 10^5 K \cdot \log n \cdot \log v(G).
\end{align*}
Hence $d(G)/2000\log v(G) \geq 50K \log n$.  Take $F=({\textcolor{red}{u}},{\textcolor{blue}{v}}) \in {\mathcal{H}_c}$ which corresponds (as in Definition \ref{def:HCdef}) to a pair of hyperedges $C,C'$ of $\cal H$ from some bucket such that $C$ has color $c$ and w.l.o.g contains $u$ and $C'$ contains $v$. We need to verify that for  $0 \leq j \leq 2$ 
the number of $F' \in \mathcal{H}_c$ with $|F' \cap F| = j$ is at most $50K\log n (n/l)^{2-j}$, i.e., satisfies the condition in Proposition \ref{prop:kikuchiclean} (with $k=3$ and $t=1$). For $j = 0$, this number is the number of hyperedges in $\mathcal{H}_c$, which  is at most $8n$ times the size of each bucket, that is, $8n \cdot (\alpha d/4)=2\alpha K nl\log n \leq 2K \log n (n/l)^2$. This is because each bucket has $\alpha d/4$ and so if it contains a hyperedge of color $c$ then it produces $\alpha d/4$ pairs of hyperedges with one of color $c$. Each such pair then contributes $8$ hyperedges to ${\mathcal H}_c$ (see Figure \ref{fig:gadget}). Furthermore, since the coloring of $\cal H$ is proper there are at most $n$ buckets containing a hyperedge of color $c$.

For $j = 1$, note that the number of $F'$ is at most $16$ times the maximum degree of $\cal H$ plus $16$ times the size of each bucket  which is at most $16d + 16\alpha d/4 \leq n /l$. To see this note that we can get $F'$ in two ways; one is by taking the unique hyperedge in $\cal H$ of color $c$ through either $u$ or $v$ and combining it with any other hyperedge from the same bucket (giving a pair of hyperedges for Definition \ref{def:HCdef}); the other is by considering any hyperedge of $\cal H$ containing either $u$ or $v$ and combining it with a hyperedge of color $c$ in the same bucket (giving a pair of hyperedges in $\cal H$ for Definition \ref{def:HCdef}); like before, in each case, each pair produces $8$ hyperedges of $\mathcal{H}$. Finally, for $j=2$, note that $F' \in \mathcal{H}_c$ with $F' = F$ can be obtained by taking a unique hyperedge $C$ of color $c$ through $u$ (or $v$) and combining it with any hyperedge in the same bucket which contains $v$. This hyperedge must intersect the vertex of $C-\{u\}$ which is incident to every other hyperedge in the bucket containing $C$ and so, since the codegree of every pair of vertices is at most $K\log n$ this gives at most $2K\log n$ options.
\end{proof}

\section{Even covers in hypergraphs for odd \texorpdfstring{$k$}{k}}

\noindent In this section, we prove Theorem \ref{thm:evencover} (ii).
In this section, we will always consider odd uniformity $k$.
The $(\log n)^{\frac{1}{k+1}}$ factor improvement (as opposed to $\log n$) follows from a \emph{new} variant of the Kikuchi graph from Definition \ref{def:kikuchi}, which we describe next.

\subsection{The Flower Kikuchi graph and even covers}\label{sec:flowerkikuchi}
\noindent We now introduce the \emph{flower Kikuchi graph}. Let $\mathcal{H}$ be an $n$-vertex $k$-uniform hypergraph with $nd$ hyperedges and minimum degree $\delta$ and such that every set of size $k-1$ is contained in at most one hyperedge. Suppose that the hyperedges of $\mathcal{H}$ are colored in \emph{red} and \emph{blue}. Suppose also that for each vertex $v$ in $\mathcal{H}$, we fix a collection $\mathcal{E}_v$ of precisely $\delta$ hyperedges containing $v$.


\begin{defn}
A \emph{flower gadget} $F = (C, P_1,\dots, P_k)$ in $\mathcal{H}$ is a collection of $k+1$ hyperedges in $\mathcal{H}$ with a \emph{center} hyperedge $C = \{v_1,\ldots, v_k\}$ and $k$ \emph{petal} hyperedges $P_1, \ldots, P_k$ which are disjoint and such that each $P_i$ is contained in $\mathcal{E}_{v_i}$ with $P_i \cap C = \{v_i\}$. $F$ is a \emph{good flower gadget} if $C$ is \emph{blue} and all hyperedges $P_i$ are \emph{red}. See Figure \ref{fig:gadget} for an illustration.
\end{defn}

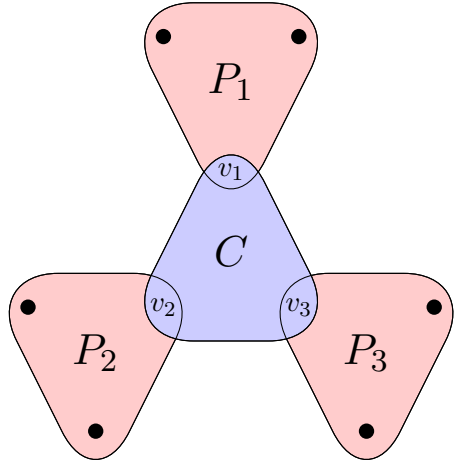
\begin{figure}[ht]
\begin{tikzpicture}[scale = 3]

    \draw [rounded corners=10mm,fill=red!20] (0,1.5)--(0.5,0.5)--(1,1.5) -- cycle;
    \draw [rounded corners=10mm,fill=red!20] (1.1,-0.7)--(0.6,0.3)--(1.6,0.3) -- cycle;
    \draw [rounded corners=10mm,fill=red!20] (-0.1,-0.7)--(-0.6,0.3)--(0.4,0.3) -- cycle;

    \draw [rounded corners=10mm,fill=blue!20] (0,0)--(0.5,1)--(1,0) -- cycle;

    \draw [rounded corners=10mm] (-0.1,-0.7)--(-0.6,0.3)--(0.4,0.3) -- cycle;
    \draw [rounded corners=10mm] (0,0)--(0.5,1)--(1,0) -- cycle;
    \draw [rounded corners=10mm] (1.1,-0.7)--(0.6,0.3)--(1.6,0.3) -- cycle;
    \draw [rounded corners=10mm] (0,1.5)--(0.5,0.5)--(1,1.5) -- cycle;
    
    \node[scale=1, color=black] at (0.2,0.15){$v_2$};
    \node[scale=5, color=black] at (0.2,1.35){$.$};
    \node[scale=1, color=black] at (0.8,0.15){$v_3$};
    \node[scale=5, color=black] at (0.8,1.35){$.$};
    \node[scale=1, color=black] at (0.5,0.75){$v_1$};
    \node[scale=5, color=black] at (-0.4,0.15){$.$};
    \node[scale=5, color=black] at (1.4,0.15){$.$};
    \node[scale=5, color=black] at (1.1,-0.4){$.$};
    \node[scale=5, color=black] at (-0.1,-0.4){$.$};

    \node[scale=1.5, color=black] at (0.5,0.4){$C$};
    \node[scale=1.5, color=black] at (0.5,1.15){$P_1$};
    \node[scale=1.5, color=black] at (-0.1,-0.05){$P_2$};
    \node[scale=1.5, color=black] at (1.1,-0.05){$P_3$};
\end{tikzpicture}
\caption{
    An illustration of a good flower gadget with $k=3$.}
    
    \label{fig:gadget}
\end{figure}

\begin{defn}
The \emph{flower $l$-Kikuchi graph} $G$ of $\mathcal{H}$ has vertex set $V:= \binom{[n]}{l}$ and its edges are defined and colored as follows. For each good flower gadget $F = (C, P_1,\dots, P_k)$, define an edge between $S, T \in V$, denoted by $S \xleftrightarrow{C} T$ and colored with $C$, if $S \oplus T$ is equal to $P_1\oplus\dots\oplus P_k \oplus C$, and 
$|S\cap (P_i \setminus v_i)| = |T\cap (P_i \setminus v_i)|=\frac{k-1}{2}$ for all $i$.
\end{defn}
\noindent We first prove the following simple claim.
\begin{claim}
An edge $S \leftrightarrow T$ can possibly be created by at most $2^{k^3}$ good flower gadgets. 
\end{claim}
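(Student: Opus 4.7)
The plan is to show that once $S$ and $T$ are fixed, the symmetric difference $U := S \oplus T$ essentially determines the flower gadget up to a choice of partition of $U$ into $k$ equal-sized pieces, each of which can be completed in at most one way to a petal.

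First I would compute $U = S \oplus T$ using the defining identity $S \oplus T = P_1 \oplus \cdots \oplus P_k \oplus C$. Since $C = \{v_1, \ldots, v_k\}$ and the petals satisfy $P_i \cap C = \{v_i\}$ with the $P_i$'s pairwise disjoint, every $v_i$ appears in exactly two of the sets $C, P_1, \ldots, P_k$ and hence cancels, while every element of $P_i \setminus \{v_i\}$ appears in exactly one of them. Therefore
\[ U \;=\; \bigsqcup_{i=1}^{k} \bigl(P_i \setminus \{v_i\}\bigr), \]
a disjoint union of $k$ sets of size $k-1$, in particular $|U| = k(k-1)$.

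Next, I would argue that a flower gadget $(C, P_1, \ldots, P_k)$ producing the edge $S \leftrightarrow T$ is determined by an (unordered) partition of $U$ into $k$ blocks of size $k-1$. Given such a partition with blocks $U_1, \ldots, U_k$, each $U_i$ has size $k-1$, and by the standing assumption that every set of size $k-1$ in $\mathcal{H}$ is contained in at most one hyperedge, there is at most one red edge $P_i \supseteq U_i$; then $v_i$ is forced as the unique element of $P_i \setminus U_i$, and the center is forced as $C = \{v_1, \ldots, v_k\}$ (which must additionally be a blue edge, but that only makes the count smaller). So the number of flower gadgets producing $S \leftrightarrow T$ is bounded by the number of such partitions of $U$.

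Finally, I would bound the number of partitions: assigning each of the $|U| = k(k-1)$ elements to one of $k$ blocks gives the crude upper bound $k^{k(k-1)} \leq k^{k^2}$. Since $\log_2 k \leq k$ for all $k \geq 1$, we have $k^{k^2} = 2^{k^2 \log_2 k} \leq 2^{k^3}$, yielding the claimed bound. There is no real obstacle here; the only subtlety is recognizing that the combination of the XOR identity and the co-degree-$1$ hypothesis on $(k-1)$-subsets reduces the counting of flower gadgets to a pure partition-counting problem on $U$.
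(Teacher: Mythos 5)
Your proposal is correct and follows essentially the same approach as the paper: reduce to counting partitions of $S \oplus T$ into $k$ blocks of size $k-1$, use the $(k-1)$-codegree hypothesis to recover each petal uniquely from its block, then recover the center. The only cosmetic difference is the bound on the number of partitions — the paper counts, for each petal, the two intersections with $S\setminus T$ and $T\setminus S$ separately to get $\binom{k(k-1)/2}{(k-1)/2}^{2k}$, while you use the cruder $k^{k(k-1)}$; both land at $2^{k^3}$.
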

\begin{proof}
To prove this, we only need to use the property that  no set of size $k-1$ is contained in more than one hyperedge of the hypergraph $\mathcal{H}$. Observe that for a good flower gadget $\{C,P_1,\ldots,P_k\}$ to create this edge, it must be that $|(S \setminus T) \cap A| = |(T \setminus S) \cap A|  = \frac{k-1}{2}$ for all $A \in \{P_1 \setminus \{v_1\},\ldots,P_k \setminus \{v_k\}\}$. Since $|S \setminus T| = |T \setminus S| = \frac{k(k-1)}{2}$, there are at most $ \binom{k(k-1)/2}{(k-1)/2}^{2k} \leq 2^{k^3}$ ways to define these sets $(S \setminus T) \cap A, (T \setminus S) \cap A$, giving us the sets $P_1 \setminus {v_1}, \ldots , P_k \setminus {v_k}$. Since the hypergraph is such that no set of size $k-1$ is contained in more than one edge, through each of these sets there is a unique edge, giving us $P_1,\ldots,P_k$ (up to permutation of names which does not matter). Then the vertex set 
$(P_1 \cup \ldots \cup P_k) \setminus (S\cup T)$ gives $C$. 
\end{proof}
Given the above claim, we fix the coloring arbitrarily by coloring $S \xleftrightarrow{C} T$ with only one of the at most $2^{k^3}$ possible options. We make note of some other crucial (but simple to check) properties of $G$ just as we did previously in Section \ref{sec:kikuchi} with Observation \ref{obs}. 
\begin{obs}\label{obs:flower}
If $G$ contains a closed walk
and we denote by $\{C^{(j)},P^{(j)}_1, \ldots, P^{(j)}_k\}$ the good flower gadgets which create the edges of the walk, then $\bigoplus_{j} \left( P^{(j)}_1\oplus\dots\oplus P^{(j)}_k \oplus C^{(j)} \right) = \emptyset$. In particular, if there is some (blue) edge which appears an odd number of times among the $C^{(j)}$, then $\mathcal{H}$ contains an even cover whose size is of the same order as the length of the walk. Also, $G$ has $N := \binom{n}{l}$ vertices and letting $m$ denote the number of good flower gadgets in $\mathcal{H}$, then $G$ has average degree
$$\frac{m \cdot \binom{n-k(k-1)}{l - k(k-1)/2}}{2^{k^3}N} \geq \frac{m}{10^{k^3}} \cdot \left(\frac{l}{n} \right)^{k(k-1)/2} .$$ 
\end{obs}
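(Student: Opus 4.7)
The plan is to verify the three claims of the observation in turn. For the first claim, I would sum the defining identity $S_i \oplus S_{i+1} = P_1^{(j_i)} \oplus \cdots \oplus P_k^{(j_i)} \oplus C^{(j_i)}$ over the consecutive pairs $(S_i, S_{i+1})$ of the closed walk. Every $S_i$ appears in exactly two consecutive symmetric differences and cancels modulo $2$, so the left-hand sides telescope to $\emptyset$; hence $\bigoplus_j (P_1^{(j)} \oplus \cdots \oplus P_k^{(j)} \oplus C^{(j)}) = \emptyset$, as claimed.

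For the second claim, I would form the multiset $\mathcal{M}$ of all hyperedges appearing across the gadgets of the walk (each center $C^{(j)}$ and each petal $P_i^{(j)}$, with multiplicity). The first claim gives $\bigoplus_{E \in \mathcal{M}} E = \emptyset$, so the sub-family of edges of $\mathcal{M}$ appearing an odd number of times is an even cover of $\mathcal{H}$. If some blue edge $C$ appears an odd number of times among the centers $\{C^{(j)}\}$, then since every $P_i^{(j)}$ is red and $C$ is blue, $C$ cannot coincide with any petal, so $C$ appears an odd number of times in $\mathcal{M}$ itself and hence belongs to this even cover. The total number of hyperedges listed in the gadgets is $(k+1)$ times the length of the walk, and since $k$ is constant, the resulting even cover has size of the same order.

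For the third claim, $V(G) = \binom{[n]}{l}$ gives $N = \binom{n}{l}$ immediately. I would then count the ordered pairs $(S,T)$ produced by a single good flower gadget $F = (C, P_1, \ldots, P_k)$. Using the disjointness of the petals and the fact that $v_i \in P_i \cap C$, one has $C \oplus P_1 \oplus \cdots \oplus P_k = \bigcup_i (P_i \setminus \{v_i\})$, a set of size $k(k-1)$. The constraint $|S \cap (P_i \setminus \{v_i\})| = |T \cap (P_i \setminus \{v_i\})| = (k-1)/2$ then fixes, for each petal independently, which $(k-1)/2$ of its $k-1$ outer vertices lie in $S$ (the other $(k-1)/2$ lie in $T$), giving $\binom{k-1}{(k-1)/2}^k$ choices; the common part $S \cap T$ is then a free choice of $l - k(k-1)/2$ vertices among the remaining $n - k(k-1)$, giving $\binom{n-k(k-1)}{l-k(k-1)/2}$ choices. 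By the preceding claim, at most $2^{k^3}$ gadgets can create the same unordered edge, so
\[
|E(G)| \;\geq\; \frac{m}{2 \cdot 2^{k^3}} \binom{k-1}{(k-1)/2}^k \binom{n-k(k-1)}{l-k(k-1)/2}.
\]
Since the average degree equals $2|E(G)|/N$, the first inequality in the observation follows (after dropping the $\binom{k-1}{(k-1)/2}^k$ factor as a harmless lower bound). For the second inequality, I would use the elementary estimate $\binom{n-k(k-1)}{l-k(k-1)/2}/\binom{n}{l} \geq c_k (l/n)^{k(k-1)/2}$ for some $c_k > 0$ depending only on $k$, valid since $l \leq n/\log n$ (otherwise Lemma \ref{lem:linalg} finishes the problem directly), and then absorb all $k$-dependent constants into $10^{k^3}$.

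The main obstacle is purely bookkeeping in the third part: one must carefully track the factor of $2$ from ordered-versus-unordered pairs, the division by $2^{k^3}$ arising from the color-fixing step, and the binomial ratio estimate, and check that they compose to give the stated bound. Since $k$ is treated as a fixed constant, no delicate asymptotic analysis is required, but the constants must be written out to be sure nothing slips.
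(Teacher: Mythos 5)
Your proposal is correct and follows essentially the same route as the paper: the telescoping/parity argument for the first two claims and, for the degree bound, counting the edges created per good gadget (choice of petal halves times the choice of $S\cap T$), dividing by the $2^{k^3}$ multiplicity bound, and finishing with the elementary ratio estimate $\binom{n-k(k-1)}{l-k(k-1)/2}/\binom{n}{l}\ge c_k(l/n)^{k(k-1)/2}$ using $l\le n/\log n$. The paper merely drops the $\binom{k-1}{(k-1)/2}^k$ and ordered/unordered factors as harmless, exactly as you do, so the bookkeeping matches.
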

\begin{proof}
We will briefly explain the average degree computation. Each good flower gadget creates at least $\binom{n-k(k-1)}{l - k(k-1)/2}$ edges since after choosing the intersections $S \cap (P_i \setminus \{v_i\}), T \cap (P_i \setminus \{v_i\})$ we are left choosing $S \cap T$ which can be any set of size $l - \frac{k(k-1)}{2}$ from $V(\mathcal{H}) \setminus \bigcup_i (P_i \setminus \{v_i\})$, which has size $n-k(k-1)$. Further, as discussed above, each edge is possibly created by at most $2^{k^3}$ good gadgets, and therefore, $G$ has at least $\frac{m \cdot \binom{n-k(k-1)}{l - k(k-1)/2}}{2^{k^3}}$ edges. The last estimate can be verified by considering separately cases when $l \leq k^2$ or $l>k^2$ and recalling that we can assume that $l \leq n/\log n$.
\end{proof}
\noindent We will show an analog of Proposition \ref{prop:kikuchiclean}, that if the flower $l$-Kikuchi graph has some 'nice' properties, then the underlying hypergraph contains a small even cover. Since we will not need the full power of such a statement, in the next section, we will only prove a simpler version of it which we use to establish Theorem \ref{thm:evencover} (ii).

\subsection{Proof of Theorem \ref{thm:evencover} (ii)}

Let $\mathcal{H}$ be a $k$-uniform $n$-vertex hypergraph with $nd$ hyperedges where $d \geq C \left(n/l \right)^{k/2-1} (\log n)^{\frac{1}{k+1}}$ for some sufficiently large constant $C$ which depends on $k$. We take $n$ large enough so that any fixed power of $\log n$ which appears in the proof is bigger than any constant dependent on $k$ appearing. Note first that using Corollary \ref{cor} we can assume that $\mathcal{H}$ is such that every set of more than $\frac{k-1}{2}$ vertices (since $k$ is odd) has co-degree at most $1$. Without loss of generality let us now assume that $d = C \left(n/l \right)^{k/2-1} (\log n)^{\frac{1}{k+1}}$. We apply Lemma \ref{prop:cleaning1} with $m$ defined as $m(1) = d/10k$ and $m(t) = (d/10k)^{\frac{k/2-t}{k/2-1}} \cdot (\log n)^{1-\frac{1}{k+1}}$ for $2 \leq t \leq \frac{k-1}{2}$. This gives us a sub-hypergraph $\mathcal{H}_0$ with at least $nd/2k$ hyperedges and a $t \leq \frac{k-1}{2}$ with the given properties in the lemma. Let us split the remainder of the proof into the cases $t > 1$ and $t=1$.

\vspace{0.2cm}
\noindent \textbf{Case 1: $t > 1$}
\vspace{0.2cm}

\noindent In this case, we know that $\mathcal{H}_0$ has an $(m(t),t)$-bucket decomposition and $\text{deg}_{\mathcal{H}_0}(S) < m(|S|)$ for all sets $S$ with $|S| > t$. We will show that this hypergraph satisfies the conditions of Proposition \ref{prop:kikuchiclean}, which will immediately give an even cover as desired. Let then $G$ be the $l$-Kikuchi graph for $\mathcal{H}_0$ and let $m := m(t)$. By Observation \ref{obs}, it has average degree $d(G) \geq e(\mathcal{H}_0)(m-1) \left(l/2n \right)^{k-t}$. Recalling that $G$ has $v(G) \leq \binom{2n}{l} \leq (2n)^l$ vertices, we have that 
\begin{eqnarray}
\label{degree}
    d(G) &\geq& \frac{nd}{2k} \cdot m  \cdot (l/2n)^{k-t}\\
    &\geq &  \frac{nd}{2k} \cdot (d/10k)^{\frac{k/2-t}{k/2-1}} \cdot (\log n)^{1-\frac{1}{k+1}} \cdot (l/2n)^{k-t}  \\
&\geq&  (\log n)^{1/k^2} \cdot (l \log (2n)) \geq (\log n)^{\frac{1}{(k+1)^2}}  \cdot \log v(G). \nonumber
\end{eqnarray}
Now, let us redefine $\mathcal{H}$ as $\mathcal{H}_0$. View $\mathcal{H}$ as an edge-colored hypergraph by coloring each hyperedge with a different color and let us verify the conditions of Proposition \ref{prop:kikuchiclean}. For each hyperedge $C \in \mathcal{H}$, the hypergraph $\mathcal{H}_C$ is then formed by taking every hyperedge $C'$ in the same bucket as $C$ and putting all hyperedges $F$ in $\mathcal{H}_C$ which satisfy one of the items in Definition \ref{def:HCdef} -- in particular, part of $F$ (either $\lceil \frac{k-t}{2} \rceil$ or $\lfloor \frac{k-t}{2} \rfloor$) will be contained in either $C^{(1)}$ or $C^{(2)}$ and the rest will be contained in either $C'^{(1)}$ or $C'^{(2)}$. Depending on the parity of $k-t$, each such $C'$ contributes  to at least $2 \binom{k-t}{\lfloor \frac{k-t}{2} \rfloor }^2$ and at most $4 \binom{k-t}{\lfloor \frac{k-t}{2} \rfloor }^2\leq 2^{2k}$ hyperedges for $\mathcal{H}_C$.

Let us fix a hyperedge $C$ in $\mathcal{H}$, $F \in \mathcal{H}_C$ and $0 \leq j \leq k-t$. Let us denote the bucket of $\mathcal{H}$ containing $C$ by $(X_i,\mathcal{E}_i)$. We bound the number of $F' \in \mathcal{H}_C$ with $|F' \cap F| = j$ as follows. Since $F \in \mathcal{H}_C$, there is some hyperedge $C'$ in the same bucket as $C$ such that one of the items in Definition \ref{def:HCdef} applies. Let us assume without loss of generality that $\{|\tilde{C}^{(1)} \cap F^{(1)}|, |\tilde{C}'^{(2)} \cap F^{(2)}|\} = \{\ceil{\frac{k-t}{2}}, \floor{\frac{k-t}{2}}\}$. Fix a pair $j_1,j_2$ with $j = j_1 + j_2$ and let us consider those $F'$ with $j_1 = |F'^{(1)} \cap F^{(1)}| \leq \lceil \frac{k-t}{2} \rceil$ and $j_2 = |F'^{(2)} \cap F^{(2)}| \leq \lceil \frac{k-t}{2} \rceil$. We first fix for $F'$ one of the items in Definition \ref{def:HCdef} which applies. Let us first count those with $\{|\tilde{C}^{(1)} \cap F'^{(1)}|, |\tilde{C}''^{(2)} \cap F'^{(2)}|\} = \{\ceil{\frac{k-t}{2}}, \floor{\frac{k-t}{2}}\}$ for some hyperedge $C''$ in the same bucket as $C$ and $C'$. 

First, we have at most $2^k$ options for the choice of $F'^{(1)}$ since it must be a subset of $C^{(1)}$. For the choice of $F'^{(2)}$, note that after choosing an appropriate hyperedge $C''$, we have again at most $2^k$ options for the choice of $\tilde{C}''^{(2)} \cap F'^{(2)}$. But now the choice of $C''$ is restricted to the fact that $X_i \cup (F'^{(2)} \cap F^{(2)})$ (which is here considered naturally as a set of vertices in $[n]$ and not in $[n] \times [2]$) must be contained in $C''$ and thus, since $|X_i \cup (F'^{(2)} \cap F^{(2)})| = t+j_2$, the conditions of Lemma \ref{prop:cleaning1} imply that there are at most $$2^k \cdot \max_{|R| = t+j_2} \text{deg}_{\mathcal{H}'}(R) \leq 2^k \cdot m(t+j_2) \leq 2^k \cdot   (d/10k)^{\frac{k/2-t-j_2}{k/2-1}} \cdot (\log n)^{1-\frac{1}{k+1}}$$ choices for such an hyperedge $C''$, where the extra $2^k$ factor serves as a gross upper bound for the number of possible subsets $X_i \cup (F'^{(2)} \cap F^{(2)}) \subseteq F$. Combining all these consideration, we have at most 

\begin{eqnarray*}
2^{3k} \cdot   (d/10k)^{\frac{k/2-t-j_2}{k/2-1}} \cdot (\log n)^{1-\frac{1}{k+1}} &\leq&  \log n \cdot (n/l)^{k/2-t-j_2} \\
&\leq& \log n \cdot (n/l)^{k-t-j-\frac{1}{2}}
\end{eqnarray*}
options for such $F'$. Notice that we used in the previous inequality that $k/2 - t - j_2 \leq (k/2+j_1)-t-j \leq (k/2 + \lceil \frac{k-t}{2} \rceil) -t - j\leq (k/2 + \lceil \frac{k-2}{2} \rceil) -t - j \leq (k/2 + \frac{k-1}{2}) -t - j \leq  k-t-j - \frac{1}{2}$, since $t \geq 2$. 

Now, the arguments and the upper bound are analogous in the other cases for $F'$ (that is, for all possible values of $j_1,j_2$ and whether $\{|\tilde{C} \cap F'^{(1)}|, |\tilde{C}'' \cap F'^{(2)}|\} = \{\ceil{\frac{k-t}{2}}, \floor{\frac{k-t}{2}}\}$ or $\{|\tilde{C} \cap F'^{(2)}|, |\tilde{C}'' \cap F'^{(1)}|\} = \{\ceil{\frac{k-t}{2}}, \floor{\frac{k-t}{2}}\}$). Since there are at most $2k^2$ such cases, by using (\ref{degree}) and choosing $C$ large enough as a function of $k$, we conclude that the number of choices for $F'$ is in total at most $$2k^2 \cdot \log n \cdot (n/l)^{k-t-j-\frac{1}{2}}   \leq (d(G)/2000\log v(G)) \cdot (n/l)^{k-t-j}, $$ as desired, where we used that $\log n \leq \sqrt{n/l}$. Hence, the property of Proposition \ref{prop:kikuchiclean} is satisfied and therefore $\mathcal{H}$ contains an even cover of size at most $2\log v(G) = O(l \log n)$, as desired.

\vspace{0.2cm}
\noindent \textbf{Case 2: $t = 1$}
\vspace{0.2cm}

\noindent In this case we know that $\mathcal{H}_0$ has minimum degree at least $d/10k$ and $\text{deg}_{\mathcal{H}_0}(S) < m(|S|) = (d/10k)^{\frac{k/2-|S|}{k/2-1}} \cdot (\log n)^{1-\frac{1}{k+1}} \leq (n/l)^{k/2-|S|} \cdot (\log n)^{1-1/k^2}$ for all sets $S$ with $|S| > 1$ and $m(1) = d/10k$. As before, let us redefine $\mathcal{H}$ as $\mathcal{H}_0$. For each vertex $v$, let $\mathcal{E}_v$ denote a collection of precisely $d/10k$ hyperedges of $\mathcal{H}$ incident on $v$. Now, let us independently and uniformly at random color each hyperedge of $\mathcal{H}$ as either red or blue. We then have the following.
\begin{claim}
With positive probability, there are at least $nd^{k+1}/(100k)^k$ good flower gadgets.
\end{claim}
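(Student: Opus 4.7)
The plan is a short first-moment calculation: I would lower bound the total number of flower gadgets in $\mathcal{H}$, observe that each is \emph{good} with probability exactly $2^{-(k+1)}$ under the independent uniform red/blue coloring, and then apply linearity of expectation.

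For the counting step, I fix any edge $C \in \mathcal{H}$ and label its vertices $v_1, \ldots, v_k$. I would build the petals greedily: suppose $P_1, \ldots, P_{i-1}$ have been chosen, and set $\mathcal{S}_i := (C \cup P_1 \cup \ldots \cup P_{i-1}) \setminus \{v_i\}$, a set of size at most $k^2$. The edge $P_i$ must lie in $\mathcal{E}_{v_i}$ and satisfy $P_i \cap \mathcal{S}_i = \emptyset$. For each $u \in \mathcal{S}_i$, the codegree bound from Lemma \ref{prop:cleaning1} gives $\text{deg}_{\mathcal{H}}(\{v_i, u\}) < m(2)$, so at most $k^2 \cdot m(2)$ edges of $\mathcal{E}_{v_i}$ are forbidden. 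Plugging in the formulas for $d$ and $m(2)$, the ratio $k^2 m(2)/(d/10k)$ is of the form $(l/n) \cdot (\log n)^{O(1)}$, which is $o(1)$ by the hypothesis $l \leq n/\log^2 n$. Hence at least $d/20k$ choices of $P_i$ remain, and iterating yields at least $(d/20k)^k$ flower gadgets with center $C$.

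Since $e(\mathcal{H}) \geq nd/(2k)$ by Lemma \ref{prop:cleaning1}, the total number of flower gadgets in $\mathcal{H}$ is at least $(nd/2k)(d/20k)^k$. Each gadget uses $k+1$ distinct edges, so its probability of being good (center blue, all petals red) is exactly $2^{-(k+1)}$. By linearity of expectation, the expected number of good flower gadgets is at least
$$\frac{nd^{k+1}}{2k \cdot (20k)^k \cdot 2^{k+1}} \; = \; \frac{nd^{k+1}}{4 \cdot 40^k \cdot k^{k+1}} \; \geq \; \frac{nd^{k+1}}{(100k)^k},$$
where the last inequality is equivalent to $(5/2)^k \geq 4k$, which holds for every odd $k \geq 3$ (and $k = 1$ is outside the scope of the argument). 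Therefore some outcome of the random coloring produces at least $nd^{k+1}/(100k)^k$ good flower gadgets, as claimed.

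The only mildly nontrivial step is showing that the greedy petal construction does not get stuck; this rests entirely on the codegree bound $m(2)$ and the restriction $l \leq n/\log^2 n$, and everything else is routine bookkeeping. The rest of the proof of Theorem \ref{thm:evencover}(ii) will then proceed by feeding this abundance of good flower gadgets into the flower Kikuchi graph of Section \ref{sec:flowerkikuchi} and invoking Lemma \ref{prop:colouredcycle}.
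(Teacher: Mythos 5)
Your proposal is correct and follows essentially the same route as the paper's proof: greedily count flower gadgets around each center $C$ using the codegree bound $m(2)$ from Lemma \ref{prop:cleaning1} to show each petal has $\Omega(d/k)$ admissible choices, then observe each gadget is good with probability $2^{-(k+1)}$ under the random red/blue colouring and apply the first moment method. The only cosmetic remark is that your bound ``$k^2 m(2)/(d/10k) = (l/n)(\log n)^{O(1)} = o(1)$'' deserves the explicit exponent (it equals $1-\tfrac{1}{k+1}-\tfrac{1}{(k+1)(k/2-1)} < 1$, so already $l \leq n/\log n$ suffices), but this does not affect correctness.
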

\begin{proof}
We show that the number of flower gadgets in $\mathcal{H}$ is at least $nd^{k+1}/(20k)^k$. Then, since each flower gadget is also good with probability $1/2^{k+1}$ then the expected number of good flower gadgets is at least $nd^{k+1}/(100k)^k$ (using $5^k > 2^{k+1}$) and further, with positive probability this holds. 

Now, let us count the number of flower gadgets $\{C,P_1, \ldots, P_k\}$. We have $e(\mathcal{H}) \geq nd/2k$ options for the hyperedge $C = \{v_1, \ldots, v_k\}$. Then, since $P_1 \in \mathcal{E}_{v_1}$ and it is disjoint to $C \setminus \{v_1\}$, there are $|\mathcal{E}_{v_1}| - |C| \cdot m(2) > d/10k - km(2)$ options for the edge $P_1$. Subsequently, $P_2$ must be an edge belonging to $\mathcal{E}_{v_2}$ which is disjoint to $(C \cup P_1) \setminus \{v_2\}$ and therefore, we have at least $|\mathcal{E}_{v_2}| - |C \cup P_1| \cdot m(2) > d/10k - 2km(2)$ options for $P_2$. Continuing in this manner, we have that the number of flower gadgets is at least $$(nd/2k) \prod_{1 \leq i \leq k} (d/10k-ikm(2)) .$$ 
Since $m(2) = O_k(1) \cdot  (n/l)^{k/2-2} \cdot (\log n) \leq o(1) \cdot d/10k^3$, this product is at least $nd^{k+1}/(20k)^k$, as desired.
\end{proof}
\noindent Now, consider the flower $l$-Kikuchi graph $G$ of $\mathcal{H}$. By the previous claim and Observation \ref{obs:flower}, it has average degree at least $\frac{nd^{k+1}/(100k)^k}{10^{k^3}} \cdot \left(\frac{l}{n} \right)^{k(k-1)/2} \geq  \frac{nd^{k+1}}{100^{k^3}} \cdot \left(\frac{l}{n} \right)^{k(k-1)/2} \geq \frac{C^{k+1}}{100^{k^3}} \cdot l \log n \geq C \log N$. Our aim is now to apply Lemma \ref{prop:coloredcycle} to the graph $G$. Indeed, by Observation \ref{obs:flower}, a closed walk in $G$ such that some color appears only once implies an even cover in $\mathcal{H}$ of the same order. However, it might be that the graph $G$ does not satisfy the conditions of Lemma \ref{prop:coloredcycle} and so, we need to first do a cleaning procedure similar to what was done in the proof of Proposition \ref{prop:kikuchiclean}. Indeed, we delete all edges $S \xleftrightarrow{C} T$ of $G$ such that one of $S$ or $T$ are incident to another edge of color $C$. Let $G'$ denote the resulting graph and note the following.
\begin{lem} \label{lem:deletion-process}
$G'$ has average degree at least $(C/2) \log N$.
\end{lem}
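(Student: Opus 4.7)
The plan is to show that the deletion process removes at most half the edges of $G$, which immediately yields $d(G') \geq d(G)/2 \geq (C/2) \log N$. This reduces to showing that for each good flower gadget $F = (C, P_1, \ldots, P_k)$, at most half of the edges of $G$ associated to $F$ are deleted. Fix such an $F$, and consider a uniformly random edge $e = S \xleftrightarrow{C} T$ associated to it. Since $e$ is deleted iff $S$ or $T$ is incident to another edge of color $C$, by the symmetry between $S$ and $T$ in the definition of the flower Kikuchi graph and a union bound, it suffices to show that $\Pr[S\text{ is incident to an edge of color } C\text{ distinct from } e] \leq 1/4$. By Markov's inequality, this follows from bounding the expected number of such edges.

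An edge of color $C$ through $S$ distinct from $e$ corresponds to a good flower gadget $F' \neq F$ with the same center $C$ such that $|S \cap (P'_i \setminus v_i)| = (k-1)/2$ for all $i$ (the partner vertex $T'$ is then uniquely determined as $S \oplus \bigcup_i (P'_i \setminus v_i)$). So the relevant expectation is bounded by
\[
\sum_{F' \neq F} \Pr\!\left[\,|S \cap (P'_i \setminus v_i)| = \tfrac{k-1}{2}\text{ for all } i\,\right].
\]
Recall that $S$ is sampled by first choosing each $S \cap (P_j \setminus v_j)$ uniformly as a $(k-1)/2$-subset and then choosing $S \cap T$ uniformly among $(l - k(k-1)/2)$-subsets of $V \setminus \bigcup_j (P_j \setminus v_j)$.

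To estimate the sum, note that each petal $P'_i$ must lie in $\mathcal{E}_{v_i}$, giving at most $(d/10k)^k$ gadgets $F'$ with center $C$. For a \emph{generic} $F'$ whose petals are disjoint from the petals of $F$, the condition $|S \cap (P'_i \setminus v_i)| = (k-1)/2$ for all $i$ depends only on the random choice of $S \cap T$ and evaluates, by a standard hypergeometric calculation, to at most $O_k\!\left((l/n)^{k(k-1)/2}\right)$. Multiplying, the total generic contribution is at most $O_k\!\left(d^k (l/n)^{k(k-1)/2}\right)$, and plugging in $d = C(n/l)^{k/2-1}(\log n)^{1/(k+1)}$ yields $O_k\!\left(C^k (l/n)^{k/2} (\log n)^{k/(k+1)}\right)$. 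Using the hypothesis $l \leq n/\log^2 n$ in the theorem statement, $(l/n)^{k/2} \leq (\log n)^{-k}$, so this bound is $O_k\!\left(C^k (\log n)^{-k^2/(k+1)}\right) = o(1)$ for large $n$, comfortably below $1/4$.

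The main obstacle is handling the \emph{non-generic} case, where a petal $P'_i$ of $F'$ intersects $\bigcup_j (P_j \setminus v_j)$. For such $F'$, the probability bound is larger because some of the required vertices in $P'_i \setminus v_i$ are already forced into $S$ via the fixed intersections $S \cap (P_j \setminus v_j)$; however, the count of such $F'$ is simultaneously smaller, since each $P'_i$ with a prescribed second vertex $w \in P_j \setminus v_j$ lies in a set of size at most $m(2)$ by Lemma \ref{prop:cleaning1}. The plan is to parameterize $F'$ by the total overlap $\bigl|\bigcup_i (P'_i \setminus v_i) \cap \bigcup_j (P_j \setminus v_j)\bigr|$ and, for each overlap pattern, trade probability against count using the codegree bounds $m(s)$ for $s \geq 2$. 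This mirrors the cleaning inside Case 1 of the proof of Theorem \ref{thm:evencover}(ii) and shows that the non-generic contribution is also $o(1)$, yielding the desired bound.
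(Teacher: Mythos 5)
Your overall framing matches the paper's: fix a good flower gadget, sample a uniformly random edge associated to it by first fixing the intersections with the petals and then choosing the rest of $S$ (and $T$) uniformly, and show via Markov/union bound that the deletion probability is below $1/2$, trading the number of competing configurations against the probability that $S$ hits them, using the codegree bounds $m(s)$ from Lemma \ref{prop:cleaning1}. However, as written the proposal only completes the negligible part of this estimate. Your ``generic'' case (all $k$ petals of $F'$ replaced and vertex-disjoint from the old petals) contributes $O_k\bigl(d^k(l/n)^{k(k-1)/2}\bigr)$, which is tiny; but the dominant contribution comes precisely from the ``non-generic'' configurations, e.g.\ a gadget $F'$ that coincides with $F$ except for a single petal $P'_i\neq P_i$: there are up to $|\mathcal{E}_{v_i}|=d/10k\approx (n/l)^{k/2-1}(\log n)^{1-1/k^2}$ of these and each is hit with probability about $(l/n)^{(k-1)/2}$, giving a term of order $\sqrt{l/n}\cdot(\log n)^{1-1/k^2}$ --- this is exactly where the hypothesis $l\le n/\log^2 n$ is needed and where the bound is tight up to logarithms. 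That estimate, together with the case of a replaced petal overlapping the old petals (handled via $m(j)$ for $j\ge 2$ and the fact that no two edges share more than $k/2$ vertices, so overlaps $j\ge\frac{k+1}{2}$ do not occur), is the actual content of the lemma, and you have only sketched a plan for it, so there is a genuine gap: the heart of the computation is deferred.

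I would also point out that your accounting unit makes the remaining work harder than necessary. The paper does not enumerate full alternative gadgets $F'$ at all: it observes that if $S$ is incident to another edge of colour $C$, then necessarily there exist some $i\in[k]$ and a single hyperedge $P\in\mathcal{E}_{v_i}$, $P\neq P_i$, with $|(P\setminus\{v_i\})\cap S|=\frac{k-1}{2}$, and union-bounds over the pairs $(i,P)$, stratified by $j=|P\cap P_i|$; for each $j$ the count is at most $m(1)=d/10k$ (for $j=1$) or $m(j)$ (for $2\le j\le\frac{k-1}{2}$), while the hitting probability is at most $2^k(l/10n)^{(k-1)/2-j+1}$, and the sum is $k\sqrt{l/n}\,(\log n)^{1-1/k^2}\le 1/4$. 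Counting single replacement petals rather than whole gadgets avoids the bookkeeping over overlap patterns of all $k$ petals (unchanged petals, changed-but-overlapping petals, determination of $T'$, goodness of $F'$), which is exactly the part your plan would have to carry out. I suggest either adopting this union bound over single petals, or, if you keep the full-gadget parameterization, writing out the overlap-pattern computation explicitly, since it is the dominant term rather than a routine afterthought.
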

\begin{proof}
Fix a good flower gadget $F = (C, P_1,\dots, P_k)$ and subsets $L_1,\dots,L_k$ and $R_1,\dots,R_k$ such that each $|R_i \cap (P_i \setminus \{v_i\})|$ and $|L_i\cap (P_i \setminus \{v_i\})|$ are equal to $\frac{k-1}{2}$. Next, we choose a uniformly random edge $S \xleftrightarrow{C} T$ of $G$ such that $S\cap (P_i \setminus \{v_i\}) = L_i$ and $T\cap (P_i \setminus \{v_i\}) = R_i$ for all $i\in[k]$. As in the definition, we use $v_i$ to denote the vertex at which $P_i$ intersects $C$. We will show that the probability that $S \xleftrightarrow{C} T$ is deleted is at most $1/2$, which gives the desired outcome.

In order for $S$ to be incident to another edge of color $C$, there must be $i\in[k]$ and a hyperedge $P \ne P_i$ in $\mathcal{E}_{v_i}$ such that $|(P\setminus \{v_i\})\cap S| = \frac{k-1}{2}$. For each such $P$, since the intersection $S \cap (P_i \setminus \{v_i\}) = L_i$ is already chosen, and further, the set $S' := S \setminus ((P_1 \setminus \{v_1\}) \cup \ldots (P_k \setminus \{v_k\}))$ is a uniformly random set of size $l - \frac{k(k-1)}{2}$ in $V(\mathcal{H}) \setminus ((P_1 \setminus \{v_1\}) \cup \ldots (P_k \setminus \{v_k\}))$. Therefore, the probability that $|(P\setminus \{v_i\}) \cap S| = \frac{k-1}{2}$ is equal to the probability that the set $S'$ intersects $P \setminus P_i$ in $\frac{k-1}{2} - |(P\setminus \{v_i\}) \cap (P_i\setminus \{v_i\})| = \frac{k-1}{2} - |P \cap P_i| + 1$ vertices. Since $S'$ is chosen uniformly, this probability is at most 
$$2^k \cdot \frac{\binom{n-(k-1)(k+\frac{1}{2})+ |P \cap P_i|-1}{l-\frac{(k+1)(k-1)}{2} + |P \cap P_i| - 1}}{\binom{n-k(k-1)}{l-\frac{k(k-1)}{2}}} \leq 2^k \cdot (l/10n)^{(k-1)/2 - |P \cap P_i| + 1},$$
where the $2^k$ is an upper bound for the number of intersections between $S'$ and $P \setminus P_i$. Now we count for each $j \geq 1$, the number of such $P$ with $|P \cap P_i| = j$. For $j = 1$, this is clearly at most $|\mathcal{E}_{v_i}| = d/10k = m(1) \leq (n/l)^{k/2-1} \cdot (\log n)^{1-1/k^2}$. Moreover, for $j \geq 2$, this is at most $\max_{|X| = j} \text{deg}_{\mathcal{H}} (X) \leq m(j) \leq (n/l)^{k/2-j} \cdot (\log n)^{1-1/k^2}$. Further, by the assumption that no two hyperedges in $\mathcal{H}$ intersect in more than $k/2$ vertices, this number is $0$ for $j \geq \frac{k+1}{2}$. Hence, combining these observations, the probability of $S$ having other $C$-colored incident edges is at most
\begin{align*}
    &2^k \sum_{1 \leq j \leq \frac{k-1}{2}} m(j) \cdot (l/10n)^{(k-1)/2 - j + 1} \\
    &\leq \sum_{1 \leq j \leq \frac{k-1}{2}} (n/l)^{k/2-j} \cdot (\log n)^{1-1/k^2}  \cdot (l/n)^{(k-1)/2 - j + 1} \\
    &\leq k \sqrt{\frac{l}{n}} \cdot (\log n)^{1-1/k^2}  \leq  1/4 ,
\end{align*}

where we used that $l \leq n/ (\log n)^2 $ and that $n$ is sufficiently large. Applying the same argument to $T$ gives the desired outcome.
\end{proof}
\noindent To finish, note that the edge-coloring in $G'$ is now a proper edge-coloring and further, it has average degree at least $20 \log N$ and thus, Lemma \ref{prop:coloredcycle} implies that $G$ contains a closed walk of size $O(\log N)$ such that some color appears exactly once. By Observation \ref{obs:flower}, this implies an even cover in $\mathcal{H}$ of size $O(l \log n)$, as desired.
\qed

\section{Concluding Remarks}
In this work, we gave a simple proof without the use of matrix concentration inequalities that recovers and improves on two recent developments on finding even covers in edge-colored hypergraphs. To conclude, we would like to point out two natural directions for further progress: 

\begin{itemize}
    \item \textbf{Even Covers in Hypergraphs}: Theorem~\ref{thm:evencover} is short of Feige's original conjecture~\cite{Feige08} that asks to show that every $k$-uniform hypergraph with $m \geq C n (n/\ell)^{k/2-1}$ hyperedges for some absolute constant $C>0$ contains an even cover of size $O(\ell \log n)$. We believe that a strengthening of our techniques via some appropriate generalization of flower Kikuchi graphs might be enough to resolve this remaining slack. On the flip side, finding explicit constructions of hypergraphs with $m = C n (n/\ell)^{k/2-1}$ hyperedges that avoid $c \ell \log n$ length even covers for some constant $c>0$ remains a natural open question. 

    \item \textbf{Lower bounds on Locally Decodable Codes:} Theorem~\ref{thm:LDC} shows an almost cubic lower bound on binary linear codes with constant distance and three query local decodability. This is still significantly far from the best known construction of three query LDCs due to Yekhanin and Efremenko that achieves a sub-exponential length. Finding methods that could prove a super-polynomial (or at least super-cubic) lower bounds is an outstanding open question. We note that in a recent work (also based on Kikuchi matrices) an exponential lower bound~\cite{kothari2023exponential} (improving on the prior best cubic lower bounds above) was in fact obtained for the stronger setting of three query linear \emph{Locally Correctable Codes} -- a stricter variant of locally decodable codes where every bit of the codeword can be decoded by reading only $3$-bits of the received corrupted codeword. Their methods, however, strongly exploit the additional structure offered by the local correction property. 
\end{itemize}


\end{document}